\newcommand{\COH}{\mathsf{COH}}
\newcommand{\diam}{\mathsf{diam}}
\newcommand{\dist}{\mathsf{dist}}
\newcommand{\ecc}{\mathsf{ecc}}
\newcommand{\ein}[1]{s_{\mathsf{in}}(#1)}
\newcommand{\eout}[1]{s_{\mathsf{out}}(#1)}
\newcommand{\fin}[1]{f_{\mathsf{in}}(#1)}
\newcommand{\fout}[1]{f_{\mathsf{out}}(#1)}
\newcommand{\LM}{\mathsf{LM}}
\newcommand{\AP}{\mathsf{AP}}
\newcommand{\NDU}{\mathsf{NDU}}
\newcommand{\restrict}{\mathord{\upharpoonright}}
\newcommand{\SC}{\kappa}
\newcommand{\V}{\mathcal{W}}
\newcommand{\W}{\mathcal{W}}
\newcommand{\DO}{\mathsf{DO}}
\newcommand{\ZA}{\mathsf{ZA}}
\newcommand{\FT}{\mathsf{FT}}
\newcommand{\FB}{\mathsf{FB}}
\newcommand{\EN}{\mathsf{EN}}
\newcommand{\AS}{\mathsf{AS}}
\newcommand{\CM}{\mathsf{CM}}
\newcommand{\HE}{\mathsf{HE}}
\begin{document}

\title{Network, Popularity and Social Cohesion: A Game-Theoretic Approach}

\author{Jiamou Liu\inst{1} and  Ziheng Wei\inst{2}}
\institute{\ The University of Auckland, New Zealand\\
 \inst{1}\texttt{jiamou.liu@auckland.ac.nz} \ \inst{2}\texttt{z.wei@auckland.ac.nz}}

\maketitle

\begin{abstract}
In studies of social dynamics, cohesion refers to a group's tendency to stay in unity, which -- as argued in sociometry -- arises from the network topology of interpersonal ties between members of the group.
We follow this idea and propose a game-based model of cohesion that not only relies on the social network, but also reflects individuals' social needs. In particular, our model is a type of cooperative games where players may gain popularity by strategically forming groups. A group is socially cohesive if the grand coalition is core stable. We study social cohesion in some special types of graphs and
draw a link between social cohesion and the classical notion of structural cohesion \cite{WhiteHarary}. We then focus on the problem of deciding whether a given social network is socially cohesive and show that this problem is $\mathsf{CoNP}$-complete. Nevertheless, we give two efficient heuristics for coalition structures where players enjoy high popularity and experimentally evaluate their performances.
% Finally we demonstrate using experiments how our games relate to community structures of social networks.
\end{abstract}

\section{Introduction}
Human has a natural desire to bind with others and needs to belong to groups. By understanding the basic instruments that generate coherent social groups, one can explain important phenomena such as the emergence of norms, group conformity, self-identity and social classes \cite{Festinger,Asch,HuismanBruggeman,Hogg}.  For example, studies reveal that on arrival to Western countries, immigrants tend to form cohesive groups among relatives and acquaintances in their ethnic communities,
which may hamper their acculturation into the new society \cite{NeeSanders}. Another study identifies cohesive groups of inhabitants in an Austrian village that correspond to stratified classes defined by succession to farmland ownership \cite{LilyanWhite}.

A social group arises when members of the groups are linked and develop bonds.
{\em Cohesion} refers to a tendency for a group to stay in unity, which is considered from two traditional  -- and seemingly opposing -- views: Firstly, group cohesion refers to a ``pulling force'' that draws members together \cite{Festinger}; Secondly, cohesion can also refer to a type of ``resistance'' of the group to disruption \cite{GrossMartin}. A common ground from both views is that cohesion amounts to a complex process characterised by both the micro-focus of psychology (fulfilment of personal objectives and needs), and the macro-focus of sociology (emergence of social classes) \cite{CarronBrawley}. A challenge is therefore to build a general but rigorous model to bridge the micro- and the macro-foci.

%Numerous theories have been developed over the years regarding group formation and cohesion. In this work, we are more interested in the mathematical models that explain this process in a rigorous manner.

Most theories of group dynamics rely on two fundamental drives: {\em tasks}, and {\em social needs}.
Indeed, every group exists to accomplish a certain task; %. Members of the group share a unified goal;
cooperation is desirable because  combining skills and resources leads to a better collective outcome. Based solely on this drive, cohesion becomes an issue of economics: how collective gains can be distributed among members to satisfy each member's goals. The theory of {\em cooperative games} tries to answer this question by assuming people as rational players who arrive at a stable outcome, i.e., a coalition formation where every coalition finds a stable division of the collective goods \cite{book}.
Social need is another important factor of group dynamics. A society embodies complex social relations such as friendship and trust. The theory of self categorization asserts that individuals mentally associate themselves into groups based on such traits \cite{Hogg}. Taking social relations into account, White and Harary describe cohesion as a network property and define {\em structural cohesion} \cite{WhiteHarary}; through this notion they prove that the two seemingly opposing views of cohesion (pulling force versus resistance to disruption) are in fact equivalent. Their work is then followed by intensive effort on community detection
 %social network analysis to identify suitable community structures
 in the last 10-15 years \cite{Coscia,Fortunato}.

We identify insufficiencies in the existing mathematical models for social cohesion: 1) Cooperative game is a general framework on the economic process of resource allocation. While cohesion may imply stability,  cooperative games often do not capture cohesion, as they miss the crucial social network dimension. 2) Structural cohesion of a network refers to the minimum number of nodes whose removal results in network disintegration \cite{WhiteHarary}; this is a property of the network on the whole, and does not embody individual needs. Hence it fails to link the macro- with the micro-focus of group dynamics.

In this paper, we define cooperative games on social networks whose nodes are rational players. Outcomes of the game not only rely on the network of social relations, but also reflect individuals' social needs. Our model is consistent with the following theories: Firstly, we follow the network approach to study social phenomena, which is started by early pioneers such as Simmel and Dirkheim \cite{Durkheim,Wolff}. Secondly, our game-theoretic formulation is in line with theories in group dynamics that focus on the interdependence of group members \cite{Lewin}. Thirdly, we rigorously verify that networks with high structural cohesion also tend to be socially cohesive according to our definition.

To define payoffs, we adopt the following intuition: People prefer to be in a group where they are seen as valuable and influential members. Thus, the payoff of players in a sub-network should reflect in some sense {\em social positions}.
Social position is a multidimensional concept affected by a range of factors from behavioural and cognitive traits to structural and positional properties. Here we focus on the positional properties and follow the sociometric view that {\em popularity} -- an important indicator of social position -- arises from interpersonal ties such as liking or attraction among people \cite{LansuCillessen}; a person is popular if she is liked by a large portion of other people.  In particular, the authors of \cite{Conti} adopt the degrees of nodes, i.e., the numbers of edges attached to nodes, as a measure of popularity and identify economical benefits for a person to become popular. Hence in our games, we define the payoffs of players based on their degrees.

We now summarise the main contributions of the paper: (1) We propose {\em popularity games} on a social network and present a game-based notion of {\em social cohesion}, which refers to the situation when the grand coalition is core stable, a well-known stability concept. \  (2) To justify our model, we show consistency between popularity games and intuition over several special classes of networks. We also build a natural connection between  structural cohesion and our notion of social cohesion (Theorem~\ref{thm:structural cohesion}). \ (3) We prove that deciding whether a network is socially cohesive is computationally hard (Theorem~\ref{theorem:grand_co}) \
(4) Finally, we present two heuristics that decide social cohesion and compute group structures with high player payoffs and evaluate them by experiments.
%experimentally demonstrate that core stable coalition formations correlates to community structures of social networks.

\smallskip

\noindent {\bf Related works.}
 The series of works \cite{Narayanam,Michalak,Szczepanski,Szczepanski2,Szczepanski4} investigates game-based network centrality. Their aim is to capture a player's centrality using  various instances of semivalues, which are based on the player's expected payoff. In contrast, our study aims at games where the payoff of players are given a priori by degree centrality and focus on core stability.
\cite{Chen} uses non-cooperative games to explain community formation in a social network. Each player in their game decides among a fixed set of strategies (i.e. a given set coalitions); the payoff is defined based on {\em gain} and {\em loss} which depend on the local graph structures.  \cite{McSweeny} studies community formation through cooperative games. The payoffs of players are given by modularity and modularity-maximising partitions correspond to Nash equilibria. The focus is on community detection but not on social cohesion. Furthermore, our payoff function is not additively separable and hence does not extends from their model.
Lastly, our work is different from community detection \cite{Fortunato}. The notion of community structure originates from physics which focuses on a macro view of the network, while our work is motivated from group dynamics and focus on individual needs and preferences.

\paragraph*{\bf Paper organisation.} Section~\ref{sec:games} presents the game model and discusses notions of popularity and social cohesion. Section~\ref{sec:special_class} looks at several standard graph classes and characterizes core stability in each class. Section~\ref{sec:structural} links structural cohesion with our game-based social cohesion. Section~\ref{sec:complexity} proves that deciding social cohesion of a given network  is $\mathsf{CoNP}$-complete. Section~\ref{sec:experiment} relaxes social cohesion to a notion of social rationality and through experiments, this section connects this notion with community structures. Finally, Section~\ref{sec:conclusion} concludes and discusses future work.

\section{Popularity Games and Social Cohesion}\label{sec:games}

\paragraph*{\bf Network and games.}
A {\em social network} is an  unweighted graph $G=(V, E)$ where $V$ is a set of nodes and $E$ is a set of (undirected) edges. An edge $\{u,v\}\subseteq V$ (where $u\neq v$),  denoted by $uv$, represents certain social relation between $u,v$, such as attraction, interdependence and friendship. We do not allow loops of the form $uu$.
If $uv\in E$, we say that $u,v$ are {\em adjacent}.
A {\em path} from a node $u$ to a node $v$ is a finite sequence of nodes $u=u_1u_2 \ldots u_n=v$ where  $u_iu_{i+1}\in E$ for all $1 \leq i < n$. The network is {\em connected} if a path exists between any pairs of nodes. Let $G=(V,E)$ be a social network. We define a cooperative game on $G$ where each node in $V$ is a rational player.
The reader is referred to \cite{GameTheoryBook} for more details on cooperative game theory.

\begin{definition}%[See \cite{book}]
	A {\em cooperative game (with non-transferrable utility)} is a pair $G=(V, \rho)$ where $V$ is a set of players, and $\rho: V\times 2^V\rightarrow \mathbb{R}$ is a {\em payoff function}.
\end{definition}
\noindent A {\em coalition formation} of $G$ is a partition of $V$ $\W=\{V_1,\ldots,V_k\}$, i.e., $\bigcup_{1\leq i\leq k} V_i=V$, $\forall 1\mathord{\leq}i\mathord{<} j\mathord{\leq} k\colon V_i\cap V_j=\varnothing$; each set $V_i$ is called a {\em coalition}. The {\em grand coalition formation} is $\W_G=\{V\}$ where $V$ is called the {\em grand coalition}.
Cooperative games describe situations where players strategically build coalitions based on individual payoffs.
A predicted outcome of the game is a {\em stable} coalition formation in the sense that no set of players have the incentive to ``disrupt'' the formation by binding into a new coalition.
More precisely, let $(V, \rho)$ be a cooperative game.
Take a coalition formation $\W$ and
set $\rho_\W(u) \coloneqq \rho(u,S)$ where $u\in S$ and $S\in \W$.
\begin{definition}
	A non-empty set of players $H\subseteq V$ is {\em blocking for $\W$} if $\forall u\in H\colon \rho(u,H)>\rho_\W(u)$; In this case we say that $\W$ is {\em blocked} by $H$.
\end{definition}
In other words, if $S$ blocks $\W$, then every $u\in S$ will get a higher payoff if they join $S$.
\begin{definition}
	A coalition formation $\W$ of $G$ is {\em core stable} w.r.t. $(V,\rho)$ if it is not blocked by any set $H\subseteq V$.
\end{definition}

\paragraph*{\bf Popularity.}
%We define a type of cooperative games that reflect players'
{\em Social positions}, as argued in sociometric studies, arise from the network topology \cite{CillessenMayeux}. A long line of research studies how different {\em centralities} (e.g.  degree, closeness, betweenness, etc.) give rise to  ``positional advantage'' of individuals. In particular, degree centrality refers to the number of edges attached to a node. Despite its conceptual simplicity, degree centrality naturally represents (sociometric) {\em popularity}, which plays a crucial role in a person's self-efficacy and social needs \cite{Zhang,Conti}. Popularity depends on the underlying group: a person may be very popular in one group while being unknown to another. Hence individuals may gain popularity by forming groups strategically. We thus make our next definition.
The {\em sub-network induced} on  a set $S\subseteq V$ is $G\restrict S=(S, E\restrict S)$ where  $E\restrict S = E\cap S^2$. $\deg_S(u)$ denotes $|\{v\colon uv\in E\restrict S\}|$ and we write $\deg(u)$ for $\deg_V(u)$.
\begin{definition}
	The {\em popularity} of a node $u$ in a subset $S\subseteq V$ is $p_S(u)\coloneqq \deg_S(u)/|S|$.
\end{definition}
Note that $p_{\{u\}}(u)=0$ for every node $u$. If $u\in S$ has an edge to all other nodes in the graph $G\restrict S$, then $u$ is the most popular node in $S$ with $p_S(u)=(|S|-1)/|S|$. The popularity of any player is in the range $[0,1)$.
\begin{definition}
	The {\em popularity game} on $G=(V,E)$ is a cooperative game $\Gamma(G)=(V,\rho)$ where $\rho\colon V\times 2^V\to [0,1)$ is defined by $\rho(u,S)=p_S(u)$.
\end{definition}
An outcome of the popularity game $\Gamma(G)$ assigns any player $u$ with a coalition $S\hspace*{-0.5mm}\ni\hspace*{-0.5mm}u$. The sum of popularity of members of $S$ equals their {\em average degree} in $S$, i.e. $\sum_{u\in S}p_S(u)=\sum_{u\in S} \deg_S(u)/|S|=2|E\restrict S|/ |S|$. The average degree measures the {\em density} of the set $S$, which reflects the amount of interactions within $S$, and thus can be regarded as a collective utility. In this sense, the popularity game is {\em efficient} in distributing such collective utility among players according to their popularity.

%\smallskip

\paragraph*{\bf Social cohesion.}
{\em Social cohesion} represents a group's tendency to remain united in satisfying members' social needs
 \cite{Cartwright}. We express cohesion through  {\em core stability} w.r.t. the popularity game $\Gamma(G)$:
Suppose a coalition formation $\W$ is not core stable. Then there is a set $S\subseteq V$ every member of which would gain a higher popularity in $S$ than in their own coalitions in $\W$. Thus there is a latent incentive among members of $S$ to disrupt $\W$ and form a new coalition $S$. On the contrary, a core stable $\W$ represents a state of the network that is resilience to such ``disruptions''.  Thus, when the grand coalition formation $\W_G=\{V\}$ is core stable, all members bind naturally and harmoniously into a single group and would remain so as long as the network topology does not change.

\begin{definition}
	A network $G=(V,E)$ is {\em socially cohesive} (or simply {\em cohesive}) if the grand coalition formation $\W_G$ is core stable w.r.t. the popularity game $\Gamma(G)$.
\end{definition}

\noindent{\bf Example 1.} Fig.~\ref{fig:example}(a) displays a network $G_1 = (V_1, E_1)$.
The popularity $p_V(i)$ is $1/3$ if $i=b,f$, and the popularity is $1/2$ if $i=a,c,d,e$. The set $\{a,b,c\}$ blocks $\W_{G_1}$ as each member has popularity  $2/3$. The only core stable formation is $\{\{a,b,c\},\{d,e,f\}\}$. Adding the edge $ad$ (shown in red) would make $G_1$ socially cohesive as the popularity of both $a,d$ in $V_1$ reaches $2/3$.
Fig.~\ref{fig:example}(b) displays another network $G_2=(V_2,E_2)$ where $p_V(a)=4/5$ and $p_V(i)=1/5$ for all $i=b,\ldots,e$. Note that this graph is socially cohesive as the grand coalition structure $\W_{G_2}$ is not blocked. However, adding the edge $bc$ (shown in red)  will destroy social cohesion as then $\{b,c\}$ blocks $\W_{G_2}$.
Social psychological studies often presume that more ties leads to higher cohesion; this example displays a more complicated picture: Adding an edge  may establish cohesion, but may also sabotage cohesion.

\begin{figure}
	\centering
	\resizebox{!}{2.5cm}{\includegraphics{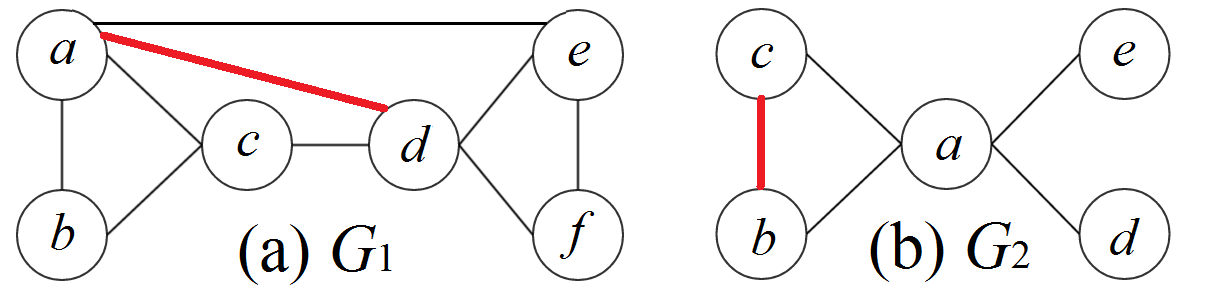}}\caption{  The graphs considered in Example 1 are in black. The added edges are highlighted in red.\label{fig:example}}%\vspace*{-0.5cm}
\end{figure}

\begin{theorem}[Connectivity]\label{thm:connected}
	If a coalition formation $\W$ of $G$ is core stable then any $S\in\W$ either consists of a set of isolated nodes, or induces a connected subgraph.
	%Thus any socially cohesive graphs are connected. %$G$ is socially cohesive only if it is connected.
\end{theorem}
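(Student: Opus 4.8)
The plan is to argue by contradiction. Suppose $\W$ is core stable, yet some coalition $S\in\W$ is neither a set of isolated nodes nor induces a connected subgraph. Since $S$ is not a set of isolated nodes, the induced subgraph $G\restrict S$ contains at least one edge; since $G\restrict S$ fails to be connected, it splits into connected components $C_1,\ldots,C_m$ with $m\geq 2$. I would then fix a component $C$ that carries an edge — such a component exists because $G\restrict S$ has an edge — so that $|C|\geq 2$ and, $C$ being connected, every node $u\in C$ satisfies $\deg_C(u)\geq 1$. The strategy is to exhibit $C$ itself as a set blocking $\W$, which contradicts core stability.

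The key observation driving the computation is that a connected component has no edges leaving it: for any $u\in C$, every neighbour of $u$ lying in $S$ already lies in $C$, hence $\deg_S(u)=\deg_C(u)$. Because $m\geq 2$, the component $C$ is a proper subset of $S$, so $|C|<|S|$. Comparing the two popularities, for every $u\in C$ I would write
\[
p_C(u)=\frac{\deg_C(u)}{|C|}=\frac{\deg_S(u)}{|C|}>\frac{\deg_S(u)}{|S|}=p_S(u)=\rho_\W(u),
\]
where the strict inequality uses $\deg_C(u)>0$ together with $|C|<|S|$. This shows $\rho(u,C)>\rho_\W(u)$ for every $u\in C$, so $C$ is blocking for $\W$ by the definition of a blocking set, contradicting the assumption that $\W$ is core stable. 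Hence no such coalition $S$ can exist, and every coalition of a core stable formation is either a set of isolated nodes or induces a connected subgraph.

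I expect the only delicate point to be ensuring that the chosen component yields a \emph{simultaneous} strict popularity gain for all of its members, since a blocking set requires a strict improvement for every member at once. This rests on two facts that must be stated cleanly: that a connected component of size at least two has all degrees positive, and that passing from $S$ to the strictly smaller vertex set $C$ strictly enlarges the denominator while leaving each member's internal degree unchanged. Once these are isolated, the blocking argument is immediate and the contradiction follows.
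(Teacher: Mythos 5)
Your proof is correct and takes essentially the same approach as the paper's: decompose the disconnected, non-isolated coalition $S$ into pieces with no edges between them and exhibit a proper piece whose members all gain popularity, contradicting core stability. If anything, your version is slightly more careful than the paper's own proof --- by choosing a connected component $C$ of $G\restrict S$ that contains an edge, you guarantee $\deg_C(u)\geq 1$ for \emph{every} $u\in C$, so the strict improvement holds simultaneously for all members of the blocking set, whereas the paper verifies the strict gain only for a single node of non-zero degree.
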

\begin{proof}
	Suppose that $S$ is not a set of isolated nodes, and that $S=V_1\cup V_2$ where $V_1,V_2$ are non-empty and  no edge exists between any pair in $V_1\times V_2$. Take a node $u\in S$ with non-zero degree, and say, without loss of generality, that $u\in V_1$. Then $$p_{V_1}(u)=\frac{\deg_{V_1}(u)}{|V_1|}> \frac{\deg_{V_1}(u)}{|S|}=\frac{\deg_{S}(u)}{|S|}=p_S(u).$$ Hence $\W$ does not contain $S$.%Hence $V_1$ blocks the coalition formation.
\qed \end{proof}

\noindent Theorem~\ref{thm:connected}
states that any two nodes (that are not isolated nodes themselves) not connected by a path have no incentive to be in the same coalition. Hence it is sufficient to only consider
coalitions that induce connected sub-networks of a social network.

\begin{definition}A set $S\subseteq V$ is called a {\em social group} of $G$ if $S$ induces a connected sub-network.
	A {\em group structure} is a coalition formation containing only social groups.
\end{definition}

\noindent The next theorem shows that social cohesion is inherently a {\em small group property}, i.e., socially cohesive networks have bounded size. %, which does not generalize to societies at large.
%As individual heir social circles \cite{Milgrom}, %the next theorem implies that social cohesion is inherently a {\em small group property}.%, which does not generalize to societies at large.

\begin{theorem}\label{thm:small}
	Let $\delta(G)$ be the maximum degree of nodes in $G = (V,E)$. Then $G$ is socially cohesive only when $|V|\leq 2\delta(G)$ unless $|V|=1$.
\end{theorem}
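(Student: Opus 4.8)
The plan is to prove the contrapositive: assuming $|V|>2\delta(G)$ together with $|V|\geq 2$, I will exhibit a two-element blocking set for the grand coalition formation $\W_G$. By definition this shows $\W_G$ is not core stable w.r.t. $\Gamma(G)$, and hence that $G$ is not socially cohesive.

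First I would record the payoff every player receives inside the grand coalition. For an arbitrary node $w$, its popularity in $V$ is $p_V(w)=\deg(w)/|V|\leq \delta(G)/|V|$, since $\delta(G)$ bounds all degrees. The hypothesis $|V|>2\delta(G)$ immediately yields $\delta(G)/|V|<1/2$, so every player earns strictly less than $1/2$ in $\W_G$. This uniform bound is the crux of the argument: any coalition in which some player can secure popularity $1/2$ is strictly preferable to the grand coalition for that player.

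Such a coalition is easy to produce whenever $G$ has an edge, i.e. $\delta(G)\geq 1$. Pick any edge $\{u,v\}\in E$ and set $H=\{u,v\}$; then $\deg_H(u)=\deg_H(v)=1$ and $|H|=2$, so $p_H(u)=p_H(v)=1/2$. Combining with the previous paragraph gives $p_H(u)=1/2>p_V(u)$ and likewise $p_H(v)>p_V(v)$, so both members of $H$ strictly improve their popularity. Hence $H$ is blocking for $\W_G$, which completes the contrapositive.

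I expect no serious obstacle in the calculation itself; the only delicate point is the existence of the edge $\{u,v\}$. If instead $\delta(G)=0$ then $G$ is edgeless, every player has popularity $0$ in every coalition, and $\W_G$ is vacuously core stable — so this degenerate case must be set aside. It is exactly the all-isolated-nodes situation separated out by Theorem~\ref{thm:connected}, and it is naturally excluded once $G$ is taken to be connected with $|V|\geq 2$. Granting $\delta(G)\geq 1$, the strict inequality $\delta(G)/|V|<1/2$ furnished by $|V|>2\delta(G)$ is precisely what the blocking argument needs, and the proof closes.
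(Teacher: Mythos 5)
Your argument is correct and essentially identical to the paper's: from $|V|>2\delta(G)$ you get $p_V(w)\leq\delta(G)/|V|<1/2$ for every node $w$, and then the endpoints of any edge form a blocking pair $H=\{u,v\}$ with $p_H(u)=p_H(v)=1/2$, which is exactly the paper's proof. Your treatment of the degenerate edgeless case is in fact more careful than the paper's, which dismisses it by appealing to Theorem~\ref{thm:connected} --- but that theorem explicitly allows a coalition consisting of isolated nodes, and, as you correctly observe, in an edgeless graph every payoff is $0$ in every coalition, so $\W_G$ is vacuously core stable and the theorem as literally stated fails there unless one adds the hypothesis you flag ($E\neq\varnothing$, e.g.\ via connectivity with $|V|\geq 2$).
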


\begin{proof}
	Suppose $|V|>2\delta(G)$ and $|V|>1$. If $E=\varnothing$, $G$ is not socially cohesive by Theorem~\ref{thm:connected}. Otherwise, pick an edge $uv$. Then $\max\{\deg(u),\deg(v)\}\leq \delta(G)<|V|/2$. This means that $\max\{p_V(u),p_V(v)\}<1/2$, and the edge $\{u,v\}$ forms a blocking set. Thus $G$ is not socially cohesive.
\qed \end{proof}

\section{Social Cohesion in Special Classes of Graphs}\label{sec:special_class}
We now investigate our games on some standard classes of graphs and characterize core stable group structures.% in each case.

\smallskip

\paragraph*{\bf Complete networks.} A graph $G=(V, E)$ is {\em complete} if an edge exists between any pair of nodes. It corresponds to the tightest social structure where all members mutually interact.
Naturally, one would expect a complete network to be socially cohesive.

\begin{theorem}\label{thm:complete}
	Let $G=(V,E)$ be a complete network. The grand coalition is the only core stable group structure. Hence $G$ is socially cohesive.
\end{theorem}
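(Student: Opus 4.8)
The plan is to reduce everything to a single closed-form computation of popularity in complete graphs. First I would observe that every induced sub-network of a complete graph is again complete, so for any non-empty $S\subseteq V$ and any $u\in S$ we have $\deg_S(u)=|S|-1$, and hence $p_S(u)=(|S|-1)/|S|=1-1/|S|$. The point I want to extract is that this quantity depends only on the cardinality $|S|$, not on the identity of $u$, and that it is strictly increasing in $|S|$. Informally, in a complete network every player's sole incentive is to sit in the largest available coalition, and all players within a coalition of a given size receive exactly the same popularity.

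Granting this observation, both claims follow quickly. To show the grand coalition $\W_G=\{V\}$ is core stable, I would argue by contradiction: a blocking set $H\subseteq V$ would require $p_H(u)>p_V(u)$ for every $u\in H$, i.e. $1-1/|H|>1-1/|V|$, forcing $|H|>|V|$, which is impossible for $H\subseteq V$. So $\W_G$ is unblocked and $G$ is socially cohesive. For uniqueness, I would show that any group structure $\W\neq\W_G$ is blocked by the whole vertex set $V$ itself. The step to handle with a little care is the remark that a partition which is not $\W_G$ must consist of at least two blocks, so every coalition $S\in\W$ satisfies $|S|<|V|$; consequently each player $u$, sitting in its coalition $S_u\in\W$, currently receives $\rho_\W(u)=1-1/|S_u|<1-1/|V|=\rho(u,V)$, so $V$ strictly raises the popularity of everyone and blocks $\W$.

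I do not expect a genuine obstacle here: the entire content sits in the monotonicity of $1-1/|S|$, and once that is in hand each direction is a two-line inequality argument. The only places demanding attention are keeping the blocking inequalities strict (so that no $H$ with $|H|=|V|$ is mistaken for a blocker, and so that $V$ genuinely dominates each smaller block), and the degenerate boundary case $|V|=1$, where $\W_G$ is vacuously the unique group structure and no blocking set can exist.
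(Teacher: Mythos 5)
Your proposal is correct and follows essentially the same route as the paper: both rest on the observation that every induced sub-network of a complete graph is complete, so $p_S(u)=(|S|-1)/|S|$ is strictly increasing in $|S|$ and hence maximised at the grand coalition. You merely spell out the two consequences (no blocking set for $\W_G$; $V$ blocks any other group structure) and the trivial case $|V|=1$ more explicitly than the paper's compressed one-line argument does, which is fine.
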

\begin{proof} Any induced sub-network $G\restrict S$ of a complete network $G=(V,E)$ (where $S\subseteq V$) is also complete. Thus $$p_S(u)=\frac{|S|-1}{|S|}<\frac{|V|-1}{|V|}=p_V(u).$$
	Therefore any player's popularity is maximised in the grand coalition $V$.
\qed \end{proof}

\paragraph*{\bf Star networks.} A {\em star network} contains a node $c$ ({\em centre}), a number of other nodes $u_1,\ldots,u_m$ ({\em tails}) where $m>1$, and edges $\{cu_1,\ldots,cu_m\}$. Intuitively, the centre $c$ would like to be in a social group with as many players as possible, while a tail would like to be with as few others as possible.

\begin{theorem}\label{thm:star}
	A group structure $\W$ of a star network is core stable if and only if the centre is in the same social group with at least half of the tails. Thus, any star network is socially cohesive.
\end{theorem}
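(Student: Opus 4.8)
The plan is to first pin down the shape of every group structure of a star network, and then to determine exactly when such a structure admits a blocking set.

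By Theorem~\ref{thm:connected}, in any core stable group structure every coalition either induces a connected subgraph or is a set of isolated nodes. In a star, the only connected induced subgraphs with more than one node are those containing the centre $c$, since each tail is adjacent only to $c$. Hence a group structure $\W$ is completely determined by one parameter: the number $k$ of tails sharing the centre's coalition $C=\{c\}\cup T$ with $|T|=k$, while the remaining $m-k$ tails each form a singleton (the case $k=0$ corresponding to $c$ itself being a singleton). I would record the payoffs this induces: $\rho_\W(c)=k/(k+1)$, then $\rho_\W(u)=1/(k+1)$ for $u\in T$, and $\rho_\W(u)=0$ for each singleton tail $u$.

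Next I would analyse an arbitrary blocking set $H$. First, $H$ can contain no node that is isolated in $G\restrict H$, since such a node has popularity $0$ and cannot strictly improve on its nonnegative current payoff; in a star this forces $H=\{c\}\cup T'$ for some nonempty tail set $T'$ with $|T'|=k'$ (any two tails are non-adjacent, so a set with no isolated node and at least two members must contain $c$). The centre strictly improves only if $k'/(k'+1)>k/(k+1)$, i.e.\ $k'>k$. The key tension now appears: a tail $u\in T'$ that currently sits with the centre improves only if $1/(k'+1)>1/(k+1)$, i.e.\ $k'<k$, which is incompatible with $k'>k$. Thus a blocking set can recruit only tails that are currently singletons, for which $1/(k'+1)>0$ holds automatically.

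Combining these constraints, a blocker $H=\{c\}\cup T'$ must be built entirely from the $m-k$ singleton tails while satisfying $k'>k$; the largest achievable value is $k'=m-k$, so a blocker exists if and only if $m-k>k$, i.e.\ $k<m/2$. Negating, $\W$ is core stable if and only if $k\geq m/2$, which is precisely the statement that the centre shares its coalition with at least half of the tails. Social cohesion then follows at once: the grand coalition formation $\W_G=\{V\}$ corresponds to $k=m\geq m/2$, hence is core stable, so every star network is socially cohesive. I expect the main obstacle to be stating the incompatibility ``$k'>k$ versus $k'<k$'' cleanly enough to exclude blocking sets that mix current-tail and singleton-tail members; once that observation is isolated, the remaining counting argument is routine.
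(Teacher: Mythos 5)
Your proof is correct and follows essentially the same route as the paper's: parametrize the structure by the number $k$ of tails in the centre's coalition, exhibit $\{c\}\cup(\text{singleton tails})$ as the blocker when $k<m/2$, and for $k\geq m/2$ exploit the tension that the centre improves only with more tails while co-resident tails improve only with fewer. Your write-up is somewhat more explicit than the paper's (ruling out centre-free blockers via the isolated-node observation, and noting that a blocker can recruit only currently-singleton tails), but the underlying argument is the same.
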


\begin{proof} Take any group structure $\W$ and suppose the centre $c$ is in a social group $S$ with  $\ell$ tails. Then $p_S(c)=\ell/(\ell+1)$ and for any tail $u_i\in S$, $p_S(u_i)=1/(\ell+1)$. All players not in $S$ has popularity 0 as their social groups contain only one node.
	\begin{itemize}
		\item
Suppose $\ell\geq m/2$. Take any set $S'\neq S$ that contains $c$. If $|S'|\leq |S|$, then $p_{S'}(c)\leq p_{S}(c)$. If $|S'|>|S|$, then  $p_{S'}(v)<p_{S}(v)$ for some tail $v$. In either case $S'$ does not block $\W$.
Hence $\W$ is core stable.

\item Suppose $\ell<m/2$. Then let $N$ be the set of all tails not in $S$. Then $|N|>\ell$. Hence the set $\{c\}\cup N$ is a blocking set for $\W$ as $p_{\{c\}\cup N}(c)>p_S(c)$.
\end{itemize}
	Thus $\W$ is core stable if and only if $\ell\geq m/2$.
\qed \end{proof}

\paragraph*{\bf Complete Bipartite Graph.}
A {\em complete bipartite graph} $K_{n,n}$ consists of disjoint sets of nodes $V_1,V_2$ with $n$ nodes each and $E=\{uv\mid u\in V_1, v\in V_2\}$ (where $m,n\neq 0$).
%Assume $m \geq\hspace*{-0.5mm}n$.
Let $\W$ be a group structure. For every $S \in \W$, we use $\ell(S)$ and $r(S)$ to denote $|\{v\mid v\in S \cap V_1 \}|$ and $|\{v\mid v\in S\cap V_2\}|$, respectively.

\begin{lemma}\label{lemma:knn}
	$\V$ is core stable only if $\forall S\in \V\colon\ell(S)\geq r(S)$.
\end{lemma}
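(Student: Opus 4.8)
The plan is to first compute popularities explicitly, then exploit a trivial blocking pair, and finally close the argument with a counting identity that uses the equal sizes of $V_1$ and $V_2$.

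First I would observe that any coalition $S$ induces the complete bipartite graph on $\ell(S)$ and $r(S)$ nodes, so a node $u\in S\cap V_1$ has $\deg_S(u)=r(S)$ while a node $v\in S\cap V_2$ has $\deg_S(v)=\ell(S)$. Hence $p_S(u)=r(S)/(\ell(S)+r(S))$ and $p_S(v)=\ell(S)/(\ell(S)+r(S))$. In particular $p_S(u)\geq 1/2$ iff $r(S)\geq\ell(S)$, and $p_S(v)\geq 1/2$ iff $\ell(S)\geq r(S)$. Next comes the key observation: for any $u\in V_1$ and $v\in V_2$, the pair $\{u,v\}$ is an edge, giving $p_{\{u,v\}}(u)=p_{\{u,v\}}(v)=1/2$. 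Thus if under $\W$ both $u$ and $v$ have popularity strictly below $1/2$, then $\{u,v\}$ blocks $\W$. Consequently a core stable $\W$ cannot contain both an \emph{underpaid} $V_1$-node and an \emph{underpaid} $V_2$-node, so either every $V_1$-node has popularity $\geq 1/2$ or every $V_2$-node does.

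Now I would argue by contradiction. Suppose some $S_0\in\W$ has $r(S_0)>\ell(S_0)$; then its $V_2$-members have popularity $<1/2$. By the previous step every $V_1$-node must have popularity $\geq 1/2$, i.e. every coalition $S$ with $\ell(S)>0$ satisfies $r(S)\geq\ell(S)$. Finally I would invoke the balance identity $\sum_{S\in\W}\ell(S)=\sum_{S\in\W}r(S)=n$: restricting both sums to coalitions with $\ell(S)>0$ (which already collect all of $V_1$), the $\ell$-sum equals $n$, so $\sum_{\ell(S)>0} r(S)\geq n$; but the total $r$-mass is only $n$, forcing equality. This pins down $r(S)=\ell(S)$ for every coalition with $\ell(S)>0$ and rules out any purely-$V_2$ coalition. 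Either outcome contradicts $S_0$ (if $\ell(S_0)>0$ then $r(S_0)=\ell(S_0)$; if $\ell(S_0)=0$ then $S_0$ is a nonexistent pure-$V_2$ coalition), establishing $\ell(S)\geq r(S)$ for all $S\in\W$.

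The main obstacle is conceptual rather than computational: since the graph is symmetric in $V_1$ and $V_2$, the blocking-pair observation alone only yields a disjunction (one whole side sits at popularity $\geq 1/2$) and cannot by itself select the direction $\ell\geq r$. The real work is the counting step, where the equal cardinalities $|V_1|=|V_2|=n$ break the symmetry and upgrade the weak ``one side is fine'' conclusion into the uniform inequality claimed. I would also note that degenerate coalitions — singletons and purely-$V_2$ sets, all of popularity $0<1/2$ — are absorbed automatically by the same counting.
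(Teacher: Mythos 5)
Your proof is correct and rests on the same key idea as the paper's: a cross edge $\{u,v\}$ with $u\in V_1$, $v\in V_2$ whose two endpoints both have popularity strictly below $1/2$ blocks $\V$, since each earns exactly $1/2$ in the pair. The paper reaches the contradiction more directly --- from a coalition $S$ with $\ell(S)<r(S)$ it immediately exhibits a counterbalancing coalition $H$ with $\ell(H)>r(H)$, using the same mass balance $\sum_{S\in\V}\ell(S)=\sum_{S\in\V}r(S)=n$ that powers your counting step, and takes $u\in S\cap V_2$, $v\in H\cap V_1$ --- whereas you package this as a disjunction-plus-counting contradiction, but the mathematical content is identical.
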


\begin{proof}
	Suppose there is $S\in \V$ with $\ell(S)<r(S)$. Since $m\geq n$, there is $H\in \V$ with $\ell(H)>r(H)$. Take any  $u\in S\cap V_2$ and $v\in H\cap V_1$. Then we have
	$$p_S(u) = \frac{r(S)}{\ell(S)+r(S)}<\frac{1}{2}$$, \ \ and \ \ $$p_H(v)=\frac{\ell(H)}{\ell(H)+r(H)} <\frac{1}{2}.$$
	 Hence, the set $\{u,v\}$ blocks $\V$ as $p_{\{u,v\}}(u)=p_{\{u,v\}}(v)=1/2$.
\qed \end{proof}

\noindent We next characterize core stable group structure in $K_{n,n}$. In particular, {\em perfect matchings}, i.e., situations when every  $v\in V_1$ is matched with a unique player in $V_2$, are core stable.

\begin{theorem}
	A group structure $\V$ of $K_{n,n}$ is core stable if and only if $\forall S\in \V\colon \ell(S)=r(S)$.% for any $S\in \V$.
\end{theorem}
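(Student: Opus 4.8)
The plan is to prove both directions separately, using Lemma~\ref{lemma:knn} together with a global counting identity for the forward implication, and a direct popularity computation for the reverse one. The reverse direction is where the real content lies, but even there the argument is short once one observes that under the hypothesis every player's payoff is exactly $1/2$.

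For the ``only if'' direction, I would assume $\V$ is core stable and immediately invoke Lemma~\ref{lemma:knn} to obtain $\ell(S)\geq r(S)$ for every $S\in\V$. The key extra ingredient is a balance identity: since $\V$ partitions $V_1\cup V_2$ and both sides have size $n$, we have $\sum_{S\in\V}\ell(S)=n=\sum_{S\in\V}r(S)$. Summing the inequalities $\ell(S)\geq r(S)$ over all $S\in\V$ then forces $n\geq n$, which can hold only if each inequality is an equality. Hence $\ell(S)=r(S)$ for every $S\in\V$.

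For the ``if'' direction, I would assume $\ell(S)=r(S)$ for every $S\in\V$ and set $k\coloneqq\ell(S)=r(S)$ for a given coalition $S$. Because every node of $K_{n,n}$ has degree $n\geq 1$, there are no isolated nodes, so by Theorem~\ref{thm:connected} each coalition induces a connected (hence balanced complete bipartite) subgraph with $k\geq 1$; a one-line computation gives $p_S(u)=k/(2k)=1/2$ for every $u\in S$ on either side. Thus every player currently earns exactly $1/2$, and to finish I would show no set $H$ can improve on this for all its members. The argument is a short case analysis: if $H$ lies entirely within one side, the induced subgraph is edgeless, so $p_H(u)=0<1/2$ and $H$ cannot block; if $H$ meets both sides, then a node $u\in H\cap V_1$ has $p_H(u)=r(H)/(\ell(H)+r(H))>1/2$ only when $r(H)>\ell(H)$, while a node $w\in H\cap V_2$ symmetrically requires $\ell(H)>r(H)$. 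Since a blocking set must strictly improve \emph{every} member simultaneously, both strict inequalities would have to hold at once, which is impossible. Hence no blocking $H$ exists and $\V$ is core stable.

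The step I expect to be the crux is the blocking-set analysis in the reverse direction: the whole proof hinges on recognizing that a candidate $H$ straddling both sides imposes the incompatible demands $r(H)>\ell(H)$ and $\ell(H)>r(H)$, and that a one-sided $H$ induces an edgeless graph. Everything else — the counting identity in the forward direction and the uniform payoff computation — is routine arithmetic once these structural observations are in place.
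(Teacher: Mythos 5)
Your proof is correct and follows essentially the same approach as the paper: Lemma~\ref{lemma:knn} together with the balance identity $\sum_{S\in\V}\ell(S)=\sum_{S\in\V}r(S)=n$ for the forward direction (a summation the paper leaves implicit when it jumps from $\ell(S)\geq r(S)$ to equality), and for the converse the observation that every payoff equals $1/2$ while any candidate blocking set must contain a member with payoff at most $1/2$, which your two-case analysis (one-sided $H$ is edgeless; two-sided $H$ imposes the incompatible demands $r(H)>\ell(H)$ and $\ell(H)>r(H)$) makes explicit. One cosmetic slip: your appeal to Theorem~\ref{thm:connected} runs its implication backwards (it deduces connectivity \emph{from} core stability), but nothing is lost, since coalitions in a group structure are connected by definition and $k\geq 1$ already follows from nonemptiness together with $\ell(S)=r(S)$.
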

\begin{proof}
	By Lemma~\ref{lemma:knn}, if $\V$ is core stable then $\forall S\in \V\colon\ell(S)=r(S)$. Conversely, if $\forall S\in \V\colon \ell(S)=r(S)$, then any $v$ has payoff $1/2$. Thus $\V$ is core stable as every $H\subseteq V$ contains some player with payoff at most $1/2$.
\qed \end{proof}

 We now turn our attention to $K_{m,n}$ with arbitrary $m\geq n\geq 1$ and focus on a special type of group structures: A {\em clan structure} $\V$ is a group structure that contains at most one non-singleton social group, called the {\em clan}; all other social groups contain only single players, called the {\em exiles}. The number $\iota(\V)$ is the number of exiles, i.e., $\iota(\V)=|\{S\in \V \mid  |S|=1\}|$. It is clear that any group structure of a star network is a clan structure.
Theorem~\ref{thm:star} then becomes a special case (when $n=1$) of the next theorem, which characterizes core stable clans structures of $K_{m,n}$.

\begin{theorem}\label{thm:inequitable_coalition}For any $m \geq n > 0$, a clan structure $\V$ of $K_{m,n}$  is core stable if and only if  the clan $S$ contains all nodes in $V_2$ and  $\ell(S) \geq  \max\{n,\iota(\V)\mathord{\cdot}n\}$. Thus, $K_{m,n}$ is socially cohesive.
\end{theorem}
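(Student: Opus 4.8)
The plan is to reduce everything to the popularity formulas inside $K_{m,n}$ together with a careful analysis of which sets can block. Throughout I would write $\ell=\ell(S)$, $r=r(S)$ for the clan $S$ and $\iota=\iota(\V)$. First I would record the payoffs: a node of $V_1\cap S$ has popularity $r/(\ell+r)$, a node of $V_2\cap S$ has popularity $\ell/(\ell+r)$, and every exile has popularity $0$. More generally, since any $H\subseteq V$ induces a complete bipartite graph with $a\coloneqq|H\cap V_1|$ and $b\coloneqq|H\cap V_2|$, a $V_1$-member of $H$ receives $b/(a+b)$ and a $V_2$-member receives $a/(a+b)$; in particular any blocking $H$ must satisfy $a,b\geq 1$, since otherwise every payoff inside $H$ is $0$.

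For the forward direction I would assume $\V$ is core stable and derive the three requirements one at a time, exhibiting a blocking set whenever a requirement fails. If $r<n$ there is a $V_2$-exile $w$; pairing it with a node $u\in V_1$ whose current popularity is below $1/2$ (an exile, or --- in case $\ell=m$ --- any clan node, using $r<m$) makes $\{u,w\}$ a blocking edge, so core stability forces $r=n$, i.e.\ $S\supseteq V_2$. With $r=n$ fixed, if $\ell<n$ then $\ell<m$, so a $V_1$-exile exists and can be paired with a node of $V_2\cap S$ to block, yielding $\ell\geq n$. The key step is the bound $\ell\geq\iota n$: here I would use the single set $H=(V_1\setminus S)\cup\{v\}$ consisting of all $\iota$ exiles together with one node $v\in V_2$, for which the exiles gain $1/(\iota+1)>0$ and $v$ improves from $\ell/(\ell+n)$ to $\iota/(\iota+1)$ exactly when $\iota n>\ell$; hence core stability gives $\ell\geq\iota n$. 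Since $\iota\geq 1$ makes $\iota n\geq n$ while $\iota=0$ forces $\ell=m\geq n$, the two bounds combine into $\ell\geq\max\{n,\iota n\}$.

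For the converse I would argue by contradiction from a hypothetical blocking set $H$ with parameters $a,b\geq 1$. The crux is that a $V_2$-member of $H$ improves only if $an>\ell b$, whereas a $V_1$-\emph{clan}-member of $H$ improves only if $\ell b>an$; these inequalities are mutually exclusive, so $H$ cannot contain both. As $b\geq 1$ forces $H$ to contain a $V_2$-node, every $V_1$-node of $H$ must be an exile, whence $a\leq\iota$. But then the $V_2$-improvement inequality gives $\iota n\geq an>\ell b\geq\ell$, contradicting $\ell\geq\iota n$. Thus no blocking set exists and $\V$ is core stable. Finally, applying this characterization to the grand coalition formation (a clan structure with $\ell=m$, $r=n$, $\iota=0$, so all conditions hold since $m\geq n$) establishes that $K_{m,n}$ is socially cohesive.

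I expect the main obstacle to be the $\ell\geq\iota n$ direction: unlike the other two bounds, each witnessed by a single blocking edge, this one requires identifying the right blocking set, and one must verify that using \emph{only} exiles on the $V_1$ side (rather than clan members) is what makes the relevant inequalities align --- which is exactly the mutual-exclusion phenomenon exploited in the converse.
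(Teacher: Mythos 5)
Your proof is correct and follows essentially the same route as the paper: the key blocking set $(V_1\setminus S)\cup\{v\}$ for the bound $\ell\geq\iota n$ is exactly the paper's set $\{v\}\cup X$, and your converse exploits the same complementarity $p(u)+p(v)=1$ for a $V_1$/$V_2$ pair in an induced complete bipartite subgraph, merely run in the contrapositive direction (you exclude clan members of $V_1$ from $H$ and reach the numeric contradiction $\iota n\geq an>\ell b\geq\ell$, where the paper forces a clan member into $H$ and derives $1<1$). The only other difference is cosmetic: you re-derive the consequences of the paper's Lemma~\ref{lemma:knn} directly via blocking edges instead of citing it.
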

\begin{proof}
	Suppose $\V$ is a core stable clan structure. Lemma~\ref{lemma:knn} implies that the clan $S$ contains all nodes in $V_2$ and $\ell(S)\mathord{\geq}n$. If $\ell(S)\mathord{<}\iota(\V)\mathord{\cdot}n$, $(\iota(\V)\mathord{+}1)\ell(S)\mathord{<}\iota(\V)(\ell(S)\mathord{+}n)$ and thus $\frac{\ell(S)}{\ell(S)+n}<\frac{\iota(\V)}{\iota(\V)+1}$. Then the set $\{v\}\cup X$ blocks $\V$ where $v\in S\cap V_2$ and $X$ is the set of exiles in $G$. Hence we must have $\ell(S)\geq \max\{n,\iota(\V)\cdot n\}$.
	
	Conversely, suppose $V_2\subseteq S$ and $\ell(S)\geq \max\{n,\iota(\W)n\}$. Assume for a contradiction that $H$ blocks $\V$. Then for any $v\in H\cap V_2$, $p_H(v)>p_S(v)=\frac{\ell(S)}{\ell(S)+n}\geq \frac{\iota(\W)}{\iota(\W)+1}$. This means that $H$ must contain a player $u\in V_1$ that belongs to $S$, and we must have $p_H(u)>p_S(u)=\frac{n}{\ell(S)+n}$. However,
	\begin{align*}
	1&=\frac{n}{\ell(S)+n}+\frac{\ell(S)}{\ell(S)+n}\\
	 &=p_S(u)+p_S(v)\\
	 &<p_H(u)+p_H(v)\\
	 &\ \ \ =\frac{\ell(H)}{\ell(H)+r(H)}+\frac{r(H)}{\ell(H)+r(H)}=1
	\end{align*}
	Contradiction. Thus $\V$ is core stable.
\qed \end{proof}

\section{Structural Cohesion and Social Cohesion}\label{sec:structural}
White and Harary in \cite{WhiteHarary} describe group cohesion using {\em graph connectivity}.

\begin{definition}[White and Harary \cite{WhiteHarary}]\label{eqn:critical} Let $G$ be a connected graph. The {\em structural cohesion} $\SC$ of a connected graph $G$ is the minimal number of nodes upon removal of which $G$ become disconnected.
\end{definition}

\noindent As stated in \cite{WhiteHarary}, a larger  $\SC$  implies that  $G$ is more resilient to conflicts or the departure of group members, and is thus more cohesive. Moreover, Menger's theorem states that $\SC$ is the greatest lower bound on the number of paths between any pairs of nodes. Hence $\SC$ is a reasonable measure of cohesion. We next link $\SC$ with our notion of social cohesion.
In \cite{Granovetter}, a pair $uv\notin E$ is seen as a type of ``structural hole'' that forbids communication and is thus referred to as an {\em absent tie}. % edge $uv\in E$ as an {\em actual tie}, and $uv\notin E$ as an {\em absent tie}.
For each $S\subseteq V$ and $u\in S$ we define the following:
\begin{enumerate}
\item $\fin{u,S}\coloneqq \deg_S(u)$ is the number of actual ties of $u$ within the group $S$,
\item $\fout{u,S}\coloneqq |\{v\notin S\mid uv\in E\}|$ is the number of actual ties of $u$  outside $S$,
\item $\ein{u,S}\coloneqq |S|-\fin{u,S}$  is the number of absent ties (including $u$ itself) in $S$, and
 \item $\eout{u,S}\coloneqq |\{v\notin S\mid uv\notin E\}|$ is the number of absent ties outside $S$.
\end{enumerate}
These variables give rise to a characterization of social cohesion.
Intuitively, if $S\subseteq V$ is a blocking set, each member $u$ tends to have many actual ties within $S$ and  absent ties outside $S$, i.e., high $\fin{u,S}$ and $\eout{u,S}$, and $u$ tends to have few absent ties in $S$ and actual ties outside $S$, i.e., low $\fout{u,S}$ and $\ein{u,S}$. Thus, we define for all $S\subseteq V$, $u\in S$,
\begin{equation}
\gamma(u,S) \coloneqq \fin{u,S}\eout{u,S}-\fout{u,S}\ein{u,S}
\end{equation}

\begin{lemma}\label{lmm:crit}
	For all $S\subseteq V$, $S$ blocks $\W_G=\{V\}$ if and only if $\forall u\in S\colon \gamma(u,S)>0$.
\end{lemma}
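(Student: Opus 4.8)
The plan is to unwind both sides of the biconditional to the single per-player inequality $p_S(u) > p_V(u)$ and then show that this is equivalent to $\gamma(u,S) > 0$ by a direct algebraic manipulation. First I would note that, by the definition of a blocking set applied to the grand coalition formation $\W_G = \{V\}$, we have $\rho_{\W_G}(u) = p_V(u)$ for every $u$, so $S$ blocks $\W_G$ precisely when $p_S(u) = \fin{u,S}/|S| > \deg(u)/|V| = p_V(u)$ holds for all $u \in S$. Thus it suffices to prove, for each fixed $u \in S$, the equivalence $p_S(u) > p_V(u) \iff \gamma(u,S) > 0$, and then conjoin over all $u \in S$.

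The key step is to record three counting identities linking the four quantities $\fin{\cdot,\cdot}, \fout{\cdot,\cdot}, \ein{\cdot,\cdot}, \eout{\cdot,\cdot}$ to $\deg(u)$, $|S|$ and $|V|$: every tie of $u$ lies either inside or outside $S$, giving $\deg(u) = \fin{u,S} + \fout{u,S}$; the group $S$ splits into the neighbours of $u$ in $S$ and the remaining nodes of $S$ (including $u$ itself), giving $|S| = \fin{u,S} + \ein{u,S}$; and the $|V| - |S|$ nodes outside $S$ are partitioned by adjacency to $u$, giving $|V| - |S| = \fout{u,S} + \eout{u,S}$. With these in hand I would cross-multiply the inequality $\fin{u,S}/|S| > \deg(u)/|V|$ (legitimate since $|S|, |V| > 0$) to obtain $|V|\cdot\fin{u,S} > |S|\cdot\deg(u)$, substitute $\deg(u) = \fin{u,S} + \fout{u,S}$, and rearrange to $(|V| - |S|)\fin{u,S} > |S|\cdot\fout{u,S}$. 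Replacing $|V| - |S|$ by $\fout{u,S} + \eout{u,S}$ and $|S|$ by $\fin{u,S} + \ein{u,S}$ and expanding, the common term $\fin{u,S}\fout{u,S}$ cancels on both sides, leaving exactly $\fin{u,S}\eout{u,S} > \fout{u,S}\ein{u,S}$, which is $\gamma(u,S) > 0$.

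Since every manipulation above is an equivalence---all the cross-multiplications are by strictly positive integers and the substitutions are identities---the chain reads in both directions, yielding $p_S(u) > p_V(u)$ if and only if $\gamma(u,S) > 0$ for each $u$, and hence the lemma. There is no serious obstacle here; the only point demanding care is the bookkeeping convention that $\ein{u,S}$ counts $u$ itself among the absent ties in $S$, which is precisely what makes the identity $|S| = \fin{u,S} + \ein{u,S}$ hold and lets the cancellation go through cleanly.
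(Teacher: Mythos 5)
Your proof is correct and follows essentially the same route as the paper's: both express $p_S(u)$ and $p_V(u)$ in terms of $\fin{u,S}$, $\fout{u,S}$, $\ein{u,S}$, $\eout{u,S}$ and reduce the per-player inequality $p_S(u)>p_V(u)$ to $\fin{u,S}\cdot\eout{u,S}>\fout{u,S}\cdot\ein{u,S}$ by cross-multiplication. You merely spell out the algebraic cancellation that the paper leaves as ``can be shown,'' which is a welcome addition but not a different argument.
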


\begin{proof}
	For each $u\in S$, $p_S(u)= \frac{\fin{u, S}}{\fin{u, S}+\ein{u, S}}$ and
	\begin{align*}
		p_V(u) =\frac{\deg(u)}{|V|}
		=\frac{\fin{u, S}+\fout{u, S}}{\fin{u, S}+\fout{u, S}+\ein{u, S}+\eout{u, S}}
	\end{align*}
	The set $S$ blocks $\W_G$ if and only if $\forall u\in S\colon p_S(u) >p_V(u)$, which can be shown to be equivalent to $\fin{u, S} \cdot \eout{u, S} > \fout{u, S} \cdot \ein{u,S}$ using the above equalities.
\qed \end{proof}

\noindent A network $G$ contains a {\em minimal cut} $A_0\subseteq V$ of size $\SC$, i.e., removing $A_0$ from $G$ decomposes the graph into $m$ distinct connected components $A_1,\ldots,A_m\subseteq V$ where $m\geq 2$, i.e., $G$ contains disjoint sets $A_1,\ldots,A_m\subseteq V$ such that each $G\restrict A_i$ is connected, and for any $j\neq i$ all paths between $G\restrict A_i$ and $G\restrict A_j$ go through $A_0$.
 We further assume  that $|A_1|\leq\cdots \leq |A_m|$ and $A_0$ is chosen in a way where $|A_1|$ is as small as possible.
% call the sequence $(A_1,\ldots,A_m)$ an {\em optimal cut sequence} if $|A_1|$ is as small as possible. If $(A_1,\ldots,A_m)$ is an optimal cut sequence, we
Let $\chi$ be the size $|A_1|$, and let $\mu$ be the largest possible length $m$ of the sequence of $A_i$'s. %When the graph $G$ is clear, we drop the subscript simply writing $\SC,\chi,\mu$ for $\SC_G,\chi_G,\mu_G$, respectively.
 We first look at the case when $\SC=1$.

 \begin{lemma}\label{lem:sc1}
 	If $\SC = 1$ and $G$ is socially cohesive, then $\chi < 2$.
 \end{lemma}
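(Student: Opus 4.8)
The plan is to argue by contradiction. Suppose $\SC = 1$, $G$ is socially cohesive, and $\chi \geq 2$, where $\chi = |A_1|$ is the size of the smallest connected component obtained after removing the single cut node. First I would fix the minimal cut $A_0 = \{w\}$ (a single node since $\SC=1$) and the resulting components $A_1, \ldots, A_m$ with $2 \leq |A_1| \leq \cdots \leq |A_m|$. The key idea is that the cut node $w$ has very limited reach into each component, while the nodes within a component $A_i$ can form a blocking set among themselves, contradicting social cohesion of the grand coalition $\W_G$.

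The central step is to produce an explicit blocking set $S$ and verify it via Lemma~\ref{lmm:crit}, i.e., show $\gamma(u,S) > 0$ for every $u \in S$. A natural candidate is $S = A_1$ itself (the smallest component), or possibly $A_1 \cup \{w\}$. For $u \in A_1$, all of $u$'s edges stay inside $A_1 \cup \{w\}$ because $A_1$ is a connected component once $w$ is removed, so $u$ has no ties to any other component $A_j$ with $j \neq 1$. This means $\fout{u,S}$ is small (at most $1$, counting only a possible edge to $w$ if we take $S = A_1$), while $\eout{u,S}$ is large, since every node in the larger components $A_2, \ldots, A_m$ is a non-neighbour of $u$. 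Because $\chi \geq 2$ and the components are ordered by size, $|V \setminus S|$ dominates, forcing $\eout{u,S}$ to exceed $\fout{u,S} \cdot \ein{u,S}$ relative to $\fin{u,S}$, so that $\gamma(u,S) > 0$.

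I would carry out the steps in this order: first, translate $\SC = 1$ into the statement that $A_0 = \{w\}$ and no edges cross between distinct components except through $w$; second, compute $\fout{u,S}$, $\eout{u,S}$, $\fin{u,S}$, $\ein{u,S}$ for $u \in S = A_1$ using the component structure, bounding $\fout{u,S} \leq 1$; third, use $\chi \geq 2$ together with the existence of at least one other (equal-or-larger) component to show $\eout{u,S}$ is large enough that $\fin{u,S}\,\eout{u,S} > \fout{u,S}\,\ein{u,S}$; fourth, invoke Lemma~\ref{lmm:crit} to conclude $S$ blocks $\W_G$, contradicting social cohesion.

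The main obstacle I anticipate is handling the possible edge from $u$ to the cut node $w$, which makes $\fout{u,S}$ potentially nonzero and could weaken the inequality for a node $u \in A_1$ that is adjacent to $w$. The cleanest way around this is likely to include $w$ in the blocking set, taking $S = A_1 \cup \{w\}$, which zeroes out the troublesome outgoing tie for those nodes; but then I must separately verify $\gamma(w,S) > 0$ for the cut node, whose degree may be spread across all components, so $\fout{w,S}$ could be large. Balancing these two cases — ensuring the inequality holds simultaneously for the component nodes and for $w$ — is the delicate part, and it is precisely here that the minimality assumption on $|A_1|$ and the bound $\chi \geq 2$ must be used tightly to guarantee the grand coalition is blocked.
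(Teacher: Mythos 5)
You have landed on the same approach as the paper --- take the smallest component $A_1$ as the blocking set and certify it with Lemma~\ref{lmm:crit} --- but your plan stalls exactly where the lemma's proof does its real work, and the ``delicate balancing'' you anticipate at the end is a misdiagnosis. The possible edge from $u\in A_1$ to the cut node is harmless with $S=A_1$: you correctly note $\fout{u,A_1}\leq 1$ and $\eout{u,A_1}\geq |V|-\chi-1$, but the step your plan never pins down is that connectivity of $G\restrict A_1$ together with $\chi\geq 2$ forces $\fin{u,A_1}\geq 1$ for \emph{every} $u\in A_1$. Writing $\ein{u,A_1}=\chi-\fin{u,A_1}$ and using $|V|\geq 2\chi+1$ (since $m\geq 2$ gives $|V|\geq |A_1|+|A_2|+1\geq 2\chi+1$), the inequality closes in one line:
\[
\gamma(u,A_1)\;\geq\;\fin{u,A_1}\bigl(|V|-\chi-1\bigr)-\bigl(\chi-\fin{u,A_1}\bigr)\;=\;\fin{u,A_1}\bigl(|V|-\chi\bigr)-\chi\;\geq\;|V|-2\chi\;>\;0,
\]
even in the worst case $\fout{u,A_1}=1$. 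So $A_1$ blocks $\W_G$ outright, with no case analysis on adjacency to the cut node. Note also where $\chi\geq 2$ genuinely enters: it guarantees $\fin{u,A_1}\geq 1$ (a singleton $A_1$ has $\fin{u,A_1}=0$, hence $\gamma(u,A_1)\leq 0$, consistent with star networks being cohesive by Theorem~\ref{thm:star}); your plan instead attributes its role to making $\eout{u,S}$ large, which is not the operative point.

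More seriously, the fallback you contemplate --- taking $S=A_1\cup\{w\}$ to ``zero out'' the outgoing tie --- is not just unnecessary but unsound, because the cut node may strictly prefer the grand coalition, so $\gamma(w,S)>0$ can fail. Concretely, let $A_1=\{a,b\}$ with edge $ab$, $A_2=\{c,d\}$ with edge $cd$, and a cut node $w$ adjacent to all four nodes; here $\SC=1$ and $\chi=2$. Then $p_{A_1\cup\{w\}}(w)=2/3<4/5=p_V(w)$, so $A_1\cup\{w\}$ does not block, whereas $S=A_1$ does, since $p_{A_1}(a)=1/2>2/5=p_V(a)$. Had you committed to the $A_1\cup\{w\}$ route, the proof would have collapsed at the verification of $\gamma(w,S)$; the correct resolution is the estimate above, which shows the original candidate $S=A_1$ suffices unconditionally.
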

 \begin{proof}Suppose $\SC=1$ and $\chi\geq 2$.
 	Let $(A_1,\ldots, A_m)$ be an optimal cut sequence. Take $u\in A_1$. As  $G$ contains a cut node, $\fout{u,A_1}\leq 1$ and $\eout{u,A_1}\geq |V|-|A_1|-1=
 	|V|-\chi-1$. Then $\gamma(u,A_1)\geq \fin{u,A_1}\cdot (|V|-\chi-1)-\ein{u,A_1}$.
 	Since $\fin{u,A_1}+\ein{u,A_1}=\chi$,
 	\begin{align*}
 	\gamma(u,A_1)&\geq \fin{u,A_1}(|V|-\chi-1)-(\chi-\fin{u,A_1})\\
 	&=\fin{u,A_1}(|V|-\chi)-\chi.
 	\end{align*} Since $|V|-\chi\mathord{>}\chi$, $\gamma(u,A_1)\mathord{>}0$.
 	Thus by Lemma~\ref{lmm:crit}, $A_1$ blocks the grand coalition $\W_G$.
 \qed \end{proof}

 \noindent We now generalize Lemma~\ref{lem:sc1} to graphs with higher structural cohesion.

 \begin{lemma}\label{lem:mu2}
 	Suppose $\mu>2$. Then any network $G$ is socially cohesive only if $\chi< \frac{\SC}{\mu-2}$.
 \end{lemma}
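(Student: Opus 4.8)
The plan is to prove the contrapositive: assuming $\chi \ge \frac{\SC}{\mu-2}$, equivalently $\SC \le \chi(\mu-2)$, I would exhibit a set blocking the grand coalition $\W_G$, so that $G$ is not socially cohesive. By Lemma~\ref{lmm:crit} it suffices to find a nonempty $S\subseteq V$ with $\gamma(u,S)>0$ for every $u\in S$. First I would record the clean identity $\gamma(u,S)=\fin{u,S}\,(|V|-|S|)-\fout{u,S}\,|S|$, obtained by substituting $\ein{u,S}=|S|-\fin{u,S}$ and $\eout{u,S}=|V|-|S|-\fout{u,S}$ into the definition of $\gamma$; this turns the blocking condition into the transparent requirement $\fin{u,S}/|S|>\fout{u,S}/(|V|-|S|)$, which I would use throughout in place of the raw four-variable expression.

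Second, I would fix a minimum cut $A_0$ realizing the maximum number $\mu$ of components, with resulting components $A_1,\dots,A_\mu$ ordered by size. Since $\chi$ is the smallest attainable value of $|A_1|$ over all minimum cuts, every component of this cut has at least $\chi$ nodes, so $|V|=\SC+\sum_i|A_i|\ge \SC+\mu\chi$. The candidate blocking set is built from the small components, and the crucial structural fact is that a node $u$ lying in a component $A_i$ has all of its external neighbours inside the cut $A_0$, since distinct components are pairwise non-adjacent; hence $\fout{u,A_i}\le|A_0|=\SC$, while connectivity of $A_i$ (and of $G$) gives $\fin{u,A_i}\ge 1$. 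Taking $S=A_1$ then yields $\gamma(u,A_1)\ge (|V|-\chi)-\SC\chi$ for every $u\in A_1$, exactly paralleling Lemma~\ref{lem:sc1}, and this is positive as soon as $|V|>(\SC+1)\chi$.

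Third, I would close the inequality $|V|>(\SC+1)\chi$ by combining $|V|\ge \SC+\mu\chi$ with the hypothesis $\SC\le\chi(\mu-2)$; the role of the factor $\mu-2$ is precisely to force $\SC+\mu\chi$ past $(\SC+1)\chi$. When a single smallest component is too large to satisfy $|S|<|V|/(\SC+1)$, I would instead bundle several of the smallest components into $S$: they remain pairwise non-adjacent, so the bounds $\fout{u,S}\le\SC$ and $\fin{u,S}\ge 1$ survive, and one trades a larger $|S|$ against the abundance of components guaranteed by a large $\mu$.

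The hard part will be the worst-case node: a node of a small component with minimum internal degree $\fin{u,S}=1$ but maximum external degree $\fout{u,S}=\SC$, for which the margin in $\gamma$ is thinnest and which therefore dictates the exact threshold. Two further delicacies need care. First, $\chi$ and $\mu$ are in general witnessed by different minimum cuts, so one cannot assume the $\mu$-component cut also attains the minimal smallest-component size $\chi$; I would have to argue on the chosen cut while only invoking $\chi$ as a global lower bound on component sizes. Second, the bookkeeping of how many of the smallest components to place in $S$ is what produces the precise denominator $\mu-2$ rather than a weaker constant, and squeezing out this exact value — possibly via a sharper lower bound on $|V|$ or a global count of the edges joining the small components to $A_0$ — is the step I expect to demand the most work.
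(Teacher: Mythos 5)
There is a genuine gap: your candidate blocking set does not block, and the arithmetic meant to close the argument does not follow from the hypothesis. Taking $S=A_1$ with the worst-case bounds $\fin{u,A_1}\geq 1$ and $\fout{u,A_1}\leq\SC$ reduces blocking to $|V|-\chi>\SC\cdot\chi$, i.e.\ $|V|>(\SC+1)\chi$; but the contrapositive hypothesis $\SC\leq(\mu-2)\chi$ together with $|V|\geq\SC+\mu\chi$ only gives $|V|-\chi\geq\SC+(\mu-1)\chi$, and requiring $\SC+(\mu-1)\chi>\SC\chi$ amounts to $\SC<(\mu-1)\chi/(\chi-1)$, which is far weaker than what the hypothesis permits when $\chi$ is large. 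Concretely, take $\mu=3$, $\chi=10$, $\SC=10$ (so $\chi\geq\SC/(\mu-2)$ holds) and a node $u\in A_1$ with one neighbour inside $A_1$ and $\SC$ neighbours in $A_0$: then $p_{A_1}(u)=1/10$ while $p_V(u)$ can be as large as $11/40$, so $A_1$ simply does not block the grand coalition. Your fallback of bundling several small components makes matters worse, not better: for the same worst-case node, $p_S(u)=1/|S|$ shrinks as $S$ grows while the bound $\fout{u,S}\leq\SC$ is unchanged, so the condition $\fin{u,S}/|S|>\fout{u,S}/(|V|-|S|)$ becomes harder to satisfy. You flagged this step as the one demanding the most work, but no bookkeeping over unions of components rescues it --- the approach itself heads in the wrong direction.

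The missing idea is that the blocking set need not be a component (or union of components) at all. The paper's proof takes a single edge $\{u,v\}$ inside $A_1$: both endpoints then receive payoff exactly $1/2$ in the two-node coalition, so it suffices to show $p_V(w)<1/2$ for every $w\in A_1$. This follows from $\deg(w)\leq(\chi-1)+\SC<\chi+\SC$ (all external neighbours of $w$ lie in $A_0$) together with $|V|\geq\mu\chi+\SC$, because the hypothesis $\chi\geq\SC/(\mu-2)$ gives $\mu\chi+\SC\geq 2(\chi+\SC)$ and hence $p_V(w)<(\chi+\SC)/(\mu\chi+\SC)\leq 1/2$. The two-node blocking set sidesteps your worst-case node entirely: its payoff of $1/2$ does not degrade when internal degrees in $A_1$ are low, whereas $p_{A_1}(u)$ does. (Your preliminary identity $\gamma(u,S)=\fin{u,S}\,(|V|-|S|)-\fout{u,S}\,|S|$ is correct, and your global bound $|V|\geq\SC+\mu\chi$ matches the paper's; it is the choice of $S$ that fails. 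Note also that both the paper's argument and any repair of yours need $\chi\geq 2$ so that $A_1$ contains an edge at all.)
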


 \begin{proof}Suppose $\mu>2$.
 	Take an optimal cut sequence  $(A_1,...,A_\mu)$  and $u\in V_1$. Since $\deg(u)< \chi+\SC$ and $|V|\geq \mu\chi+\SC$, we have $p_V(u) < \frac{\chi+\SC}{\mu\cdot \chi+\SC}$. Suppose $\chi\geq\frac{\SC}{\mu-2}$. Then $ \mu\chi-2\chi \geq\SC$. One can then derive that $p_V(u)<\frac{\chi+\SC}{\mu\cdot \chi+\SC}\leq\frac{1}{2}$. Thus any edge $\{u,v\}$ in $G\restrict V_1$ forms a blocking set of the grand coalition formation $\W_G$.
 \qed \end{proof}

 Lemma~\ref{lem:mu2} can be used as a (semi-)test for social cohesion when $\mu>2$:  whenever $\chi$ exceeds $\frac{\SC}{\mu-2}$, $G$ is not socially cohesive. Clearly, more graphs become socially cohesive as $\SC$ gets larger.
 Summarizing Lemma~\ref{lem:sc1} and Lemma~\ref{lem:mu2}, we obtain the following result.

 \begin{theorem}\label{thm:structural cohesion}
 	Let $G$ be a network.
 \begin{itemize}
  \item If $\SC=1$, then $G$ is not socially cohesive as long as $\chi\geq 2$.
  \item If $\SC>1$ and $\mu>2$, then $G$ is not socially cohesive as long as $\chi\geq\frac{\SC}{\mu-2}$
  \end{itemize}
 \end{theorem}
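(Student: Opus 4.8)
The plan is to observe that Theorem~\ref{thm:structural cohesion} is nothing more than a clean repackaging of the two lemmas that immediately precede it, so the proof will be short and essentially bookkeeping. First I would recall that for any connected $G$ we have fixed an optimal cut sequence $(A_1,\ldots,A_m)$ with $|A_1|\leq\cdots\leq|A_m|$, that $\chi=|A_1|$ denotes the size of the smallest component, and that $\mu$ is the largest achievable number of components $m$. These quantities are well-defined for a given $G$, and the whole point is that each bullet of the theorem is a direct restatement of a lemma about these quantities.

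For the first bullet, when $\SC=1$, I would simply invoke Lemma~\ref{lem:sc1}. That lemma shows (contrapositively) that if $\SC=1$ and $\chi\geq 2$, then the smallest component $A_1$ already forms a blocking set for the grand coalition $\W_G$; by the blocking characterization in Lemma~\ref{lmm:crit}, the existence of such a set means $\W_G$ is not core stable, and hence by definition $G$ is not socially cohesive. For the second bullet, when $\SC>1$ and $\mu>2$, I would invoke Lemma~\ref{lem:mu2}, which states that $G$ is socially cohesive only if $\chi<\frac{\SC}{\mu-2}$; taking the contrapositive, if $\chi\geq\frac{\SC}{\mu-2}$ then $G$ fails to be socially cohesive, which is exactly the claim. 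In that case Lemma~\ref{lem:mu2} even exhibits a concrete blocking set, namely any edge $\{u,v\}$ inside the smallest component $G\restrict A_1$, whose two endpoints each have popularity below $1/2$ in $V$.

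Since both bullets follow by contraposition from lemmas already proved, there is no real obstacle here; the only thing to watch is that the hypotheses of the two lemmas are correctly matched to the two cases ($\SC=1$ with $\chi\geq 2$ for the first, and $\mu>2$ with $\chi\geq\frac{\SC}{\mu-2}$ for the second), and that one phrases everything in terms of the negation of social cohesion. I would therefore write the proof as follows:

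\begin{proof}
Both parts follow immediately from the preceding lemmas by contraposition. Suppose first that $\SC=1$ and $\chi\geq 2$. By Lemma~\ref{lem:sc1}, a socially cohesive graph with $\SC=1$ must satisfy $\chi<2$; equivalently, if $\chi\geq 2$ then $G$ is not socially cohesive. Now suppose $\SC>1$ and $\mu>2$, and assume $\chi\geq\frac{\SC}{\mu-2}$. By Lemma~\ref{lem:mu2}, a socially cohesive graph with $\mu>2$ satisfies $\chi<\frac{\SC}{\mu-2}$; hence under our assumption $G$ cannot be socially cohesive. In both cases the lemmas exhibit a set blocking the grand coalition formation $\W_G$, so by Lemma~\ref{lmm:crit} the formation $\W_G$ is not core stable and $G$ is not socially cohesive.
\qed \end{proof}
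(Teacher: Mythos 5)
Your proof is correct and matches the paper exactly: the paper offers no separate argument for Theorem~\ref{thm:structural cohesion}, stating only that it summarizes Lemma~\ref{lem:sc1} and Lemma~\ref{lem:mu2}, which is precisely the contrapositive bookkeeping you carry out. Your hypotheses are matched correctly to the two cases, so nothing further is needed.
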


\begin{remark} The only case left unexplained is when $\SC>1$ and $\mu=2$. In this case there exist graphs with arbitrarily large $\chi$ but are socially cohesive.
\end{remark}
\section{The Computational Complexity of Deciding Social Cohesion} \label{sec:complexity}
We focus on the computational problem of deciding if a network is socially cohesive. More precisely,
We are interested in the decision problem $\COH$: %Given a network $G=(V,E)$, decide if $G$ is socially cohesive.
\begin{description}
	\item[INPUT] A network $G=(V,E)$
	\item[OUTPUT] Decide if $G$ is socially cohesive.
\end{description}
Instead of considering networks in general, we restrict our attention to a special type of networks.
The {\em distance} between two nodes $u$ and $v$, denoted by $\dist(u,v)$, is the length of a shortest path from $u$ to $v$ in $G$. The {\em eccentricity} of $u$ is $\ecc(u)=\max_{v\in V} \dist(u,v)$. The {\em diameter} of the network $G$ is $ \diam(G)=\max_{u\in V} \ecc(u)$.
\begin{definition}
	A graph $G=(V, E)$ is {\em diametrically uniform} if all $v\in V$ have the same eccentricity; otherwise $G$ is called {\em non-diametrically uniform}.
\end{definition}
 We use $\NDU_2$ to denote the set of all  non-diametrically uniform connected graphs whose diameter is at most $2$. Our goal is to show that the $\COH$ problem is computationally hard already on the class $\NDU_2$. The following is a characterization theorem for $\NDU_2$ graphs.

\begin{theorem}\label{thm:eccentricity}
	The network $G$ belongs to $\NDU_2$ if and only if its nodes $V$ can be partitioned into two non-empty set $V_1$ and $V_2$, where
	$V_1=\{u\mid vu\in E\text{ for all } v\neq u\}$.
\end{theorem}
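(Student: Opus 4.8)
The plan is to reduce the characterization to a simple dictionary between eccentricity values and membership in the universal-node set $V_1=\{u\mid vu\in E\text{ for all }v\neq u\}$. The crucial observation is that in any connected graph on at least two nodes whose diameter is at most $2$, every node has eccentricity either $1$ or $2$: the eccentricity is at least $1$ because there is another node present, and at most $\diam(G)\leq 2$. Moreover $\ecc(u)=1$ holds precisely when $u$ is adjacent to every other node, i.e.\ when $u\in V_1$; equivalently $\ecc(u)=2$ precisely when $u\in V_2\coloneqq V\setminus V_1$. Establishing this equivalence first turns the assertion ``$G$ is non-diametrically uniform'' into ``both eccentricity values $1$ and $2$ occur'', which is exactly ``$V_1\neq\varnothing$ and $V_2\neq\varnothing$''.

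For the forward direction, I would assume $G\in\NDU_2$. Non-diametric-uniformity forces at least two distinct eccentricity values, hence at least two nodes, and by the dictionary above these values can only be $1$ and $2$. A node realising eccentricity $1$ lies in $V_1$ and a node realising eccentricity $2$ lies in $V_2$, so both parts of the partition are non-empty, as required.

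For the backward direction, I would assume $V$ splits into non-empty $V_1$ and $V_2$ with $V_1$ the set of universal nodes, and verify the three defining properties of $\NDU_2$. Connectivity and the diameter bound both follow from picking any $u\in V_1$: since $u$ is adjacent to all other nodes, every pair $x,y$ is joined by the path $x\,u\,y$ of length at most $2$, so $G$ is connected and $\diam(G)\leq 2$. For non-uniformity, every $u\in V_1$ has $\ecc(u)=1$, while every $v\in V_2$ is non-universal and hence has a non-neighbour, giving $\ecc(v)\geq 2$; the diameter bound then pins this down to $\ecc(v)=2$. Since both $V_1$ and $V_2$ are non-empty, eccentricities $1$ and $2$ both occur and $G$ is non-diametrically uniform.

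The proof is essentially bookkeeping, so I do not expect a genuine obstacle; the only point that requires care is using the upper bound $\diam(G)\leq 2$ in both directions --- once to force the two distinct eccentricities to be exactly $1$ and $2$, and once to upgrade the lower bound $\ecc(v)\geq 2$ for $v\in V_2$ into the exact value $\ecc(v)=2$.
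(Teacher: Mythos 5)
Your proof is correct and takes essentially the same approach as the paper's: the whole content is the dictionary identifying $V_1$ with the eccentricity-$1$ (universal) nodes and $V_2$ with the eccentricity-$2$ nodes. In fact your write-up is more complete than the paper's, which argues only the forward direction and leaves the converse (that non-empty $V_1$ and $V_2$ force connectivity, $\diam(G)\leq 2$, and non-uniformity, hence $G\in\NDU_2$) implicit, whereas you verify it via the path $x\,u\,y$ through a universal node $u\in V_1$.
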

\begin{proof}
	Since $G$ has diameter 2 and is not diametrically uniform, there is a non-empty set $V_1$ of nodes with eccentricity 1, and the other nodes (call them $V_2$) have eccentricity 2.  The sets $V_1,V_2$ satisfy the condition in the theorem.
\qed \end{proof}

\noindent Let $G$ be a graph in $\NDU_2$. We call $\{V_1,V_2\}$ as described in Theorem~\ref{thm:eccentricity} the {\em eccentricity partition} of $G$. We first present some simple properties of $\NDU_2$.

\begin{lemma}\label{lmm:ndu_separatist}
	The network $G$ in $\NDU_2$ is socially cohesive if and only if no set $S\subseteq V_2$ blocks $\W_G$.%for all $S\subseteq V_2$ there is some $v\in S$ where $p_S(v)\leq p_V(v)$.
\end{lemma}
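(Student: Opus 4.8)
The plan is to reduce the hunt for blocking sets of $\W_G$ to subsets of $V_2$ alone, by proving that no node of $V_1$ can ever belong to a blocking set. The forward direction is immediate: if $G$ is socially cohesive then $\W_G$ is core stable, so no set whatsoever blocks it, and in particular no $S\subseteq V_2$ does. All the content lies in the converse, and I would establish it by contraposition: assuming some $H\subseteq V$ blocks $\W_G$, I would show that necessarily $H\subseteq V_2$, so that a blocking subset of $V_2$ already exists.

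First I would exploit the defining property of $V_1$ furnished by Theorem~\ref{thm:eccentricity}: every $u\in V_1$ is adjacent to all other nodes of $G$. Consequently, for any social group $S\ni u$ we have $\deg_S(u)=|S|-1$, so $p_S(u)=(|S|-1)/|S|$. Since $(|S|-1)/|S|$ is strictly increasing in $|S|$, the popularity $p_S(u)$ of a $V_1$-node is uniquely maximized at $S=V$; in particular $p_S(u)<p_V(u)$ for every proper subset $S\subsetneq V$ containing $u$. Next I would record the elementary observation that any blocking set $H$ of $\W_G$ must be a proper subset of $V$: because $\rho_{\W_G}(u)=p_V(u)$, blocking demands $p_H(u)>p_V(u)$ for all $u\in H$, which is impossible when $H=V$.

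Combining these two facts finishes the argument. If a blocking set $H$ contained some $u\in V_1$, then $H\subsetneq V$ forces $p_H(u)<p_V(u)$ by the monotonicity observation, directly contradicting the blocking requirement $p_H(u)>p_V(u)$. Hence $H\cap V_1=\varnothing$, i.e.\ $H\subseteq V_2$, which completes the contrapositive and the lemma. I do not anticipate a genuine obstacle here: the whole proof rests on the monotonicity of $(|S|-1)/|S|$ for full-degree nodes. The only point demanding care is the strictness of the inequalities together with the exclusion of the case $H=V$, since it is precisely the strict loss $p_H(u)<p_V(u)$ that rules out any $V_1$-node from ever strictly benefiting from a smaller coalition.
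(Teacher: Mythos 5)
Your proof is correct and follows essentially the same route as the paper: both arguments show that a node of $V_1$, being adjacent to all other nodes, attains popularity $(|S|-1)/|S|\leq(|V|-1)/|V|=p_V(u)$ in any coalition $S\ni u$, so no blocking set of $\W_G$ can meet $V_1$ and any blocking set lies in $V_2$. Your extra care about strictness and excluding $H=V$ is fine but not needed, since the paper's nonstrict chain $p_V(u)\geq (|S|-1)/|S|\geq p_S(u)$ already contradicts the strict blocking condition $p_S(u)>p_V(u)$.
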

\begin{proof}
	One direction (left to right) is clear. Conversely, suppose the network is not socially cohesive. Let $S \subset V$ be a blocking set of the grand coalition formation, i.e., $\forall u\mathord{\in} S\colon \ p_S(u) > p_V(u)$. If $S\cap V_1 \neq \varnothing$. Then $\forall u\in S\cap V_1\colon p_V(u)= \frac{|V|-1}{|V|}$. However, $p_V(u)\geq  \frac{|S|-1}{|S|}\geq p_S(u)$ which contradicts that fact that $S$ is a blocking set.
\qed \end{proof}

By the lemma above, the structure of $G\restrict V_2$ is crucial in determining social cohesion of $G$. For any $S\subseteq V_2$ and $u\in S$, we recall the notions $\fin{u,S},\fout{u,S},\ein{u,S}$, and $\eout{u,S}$ from Section~\ref{sec:structural}, but re-interpret these values within the sub-network $G\restrict V_2$. Hence, we now set  $\fout{u,S}$ as $|\{v\in V_2\setminus S\mid uv\in E\}|$, i.e., the number of ties that $u$ has within $V_2$ but not in $S$, the other variables remain as originally defined. Thus
\begin{equation}\label{eqn:substitute}
|V_2|=\fin{u, S}+\fout{u, S}+\ein{u, S}+\eout{u, S}
\end{equation}
We then define the value
\[
\lambda(u,S)=\frac{\fin{u, S}\cdot \eout{u, S}}{\ein{u, S}} - \fout{u, S}
\]

\begin{theorem}\label{lmm:grand_ne_su_condition}
	A network	$G$  in $\NDU_2$ is socially cohesive if and only if for all $S\subseteq V_2$ there exists $v\in S$ such that $ |V_1| \geq \lambda(v, S)$
\end{theorem}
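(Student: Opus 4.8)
The plan is to work directly from Lemma~\ref{lmm:ndu_separatist}: since $G\in\NDU_2$ is socially cohesive precisely when no $S\subseteq V_2$ blocks $\W_G$, it suffices to characterize, for a fixed nonempty $S\subseteq V_2$, when every node of $S$ strictly prefers $S$ to the grand coalition. I would fix $u\in S$ and rewrite the blocking inequality $p_S(u)>p_V(u)$ using the reinterpreted quantities $\fin{u,S},\fout{u,S},\ein{u,S},\eout{u,S}$ inside $G\restrict V_2$. The one structural fact driving everything is that, by the description of $V_1$ in Theorem~\ref{thm:eccentricity}, every node of $V_2$ is adjacent to all of $V_1$; hence $\deg(u)=|V_1|+\fin{u,S}+\fout{u,S}$, while $|V|=|V_1|+|V_2|$ with $|V_2|$ supplied by~\eqref{eqn:substitute}.

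Next I would substitute $p_S(u)=\fin{u,S}/(\fin{u,S}+\ein{u,S})$ and $p_V(u)=\deg(u)/|V|$ into $p_S(u)>p_V(u)$ and cross-multiply; both denominators $|S|$ and $|V|$ are positive, so the direction of the inequality is preserved. The expected outcome of this computation is that all mixed terms cancel and the inequality reduces to
\[
\fin{u,S}\cdot\eout{u,S}-\fout{u,S}\cdot\ein{u,S} \;>\; |V_1|\cdot\ein{u,S}.
\]
At this point the key observation is that $\ein{u,S}=|S|-\fin{u,S}\geq 1$ always holds, because $u$ carries no self-loop and so $\fin{u,S}=\deg_S(u)\leq|S|-1$; this is exactly what makes the division in $\lambda$ legitimate. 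Dividing the displayed inequality through by the positive integer $\ein{u,S}$ yields the clean equivalence that $u$ strictly prefers $S$ if and only if $\lambda(u,S)>|V_1|$.

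To finish, I would quantify over the members of $S$: the set $S$ blocks $\W_G$ iff $\lambda(u,S)>|V_1|$ for \emph{every} $u\in S$, so $S$ fails to block iff there is some $v\in S$ with $|V_1|\geq\lambda(v,S)$. Letting $S$ range over all subsets of $V_2$ and appealing once more to Lemma~\ref{lmm:ndu_separatist} gives precisely the stated characterization.

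I do not anticipate a conceptual obstacle here; the computation is elementary. The two points demanding care are the bookkeeping in the cross-multiplication---one must check that only the three terms $\fin{u,S}\eout{u,S}$, $\fout{u,S}\ein{u,S}$ and $|V_1|\ein{u,S}$ survive---and the strict positivity $\ein{u,S}\geq 1$, without which $\lambda$ would be undefined and the reduction would break down.
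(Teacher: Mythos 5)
Your proposal is correct and takes essentially the same route as the paper's proof: reduce to nonempty $S\subseteq V_2$ via Lemma~\ref{lmm:ndu_separatist}, write $p_S(u)=\fin{u,S}/(\fin{u,S}+\ein{u,S})$ and $p_V(u)=(\fin{u,S}+\fout{u,S}+|V_1|)/(|V_1|+|V_2|)$, cross-multiply and cancel using~\eqref{eqn:substitute} to get $\fin{u,S}\eout{u,S}-\fout{u,S}\ein{u,S}>|V_1|\ein{u,S}$, divide by $\ein{u,S}$ to obtain $\lambda(u,S)>|V_1|$, and negate the universal quantifier over $u\in S$. Your explicit verification that $\ein{u,S}\geq 1$ (so the division defining $\lambda$ is legitimate) is a detail the paper leaves implicit, and it indeed holds since $\ein{u,S}=|S|-\fin{u,S}$ counts $u$ itself.
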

\begin{proof}
	By Lemma~\ref{lmm:ndu_separatist}, we only need to examine subsets $S\subseteq V_2$. Every $u\in S$ has $$p_S(u) = \frac{\fin{u, S}}{\fin{u, S}+\ein{u, S}}\text{ and } p_V(u) = \frac{\fin{u, S}+\fout{u, S}+|V_1|}{|V_2|+|V_1|}.$$
	A set $S\subseteq V_2$ blocks $\W_G=\{V\}$ if and only if $\forall u\in S\colon p_S(u)>p_V(u)$. Note that	$$\frac{\fin{u, S}}{\fin{u, S}+\ein{u, S}} > \frac{\fin{u, S}+\fout{u, S}+|V_1|}{|V_2|+|V_1|}\text{ if and only if}$$ $\fin{u,S}(|V_1|\mathord{+}|V_2|)>(\fin{u, S}\mathord{+}\ein{u, S})( \fin{u, S}\mathord{+}\fout{u, S}\mathord{+}|V_1|)$.

	Applying (\ref{eqn:substitute}), $S$ blocks $\W_G$ if and only if $\forall u\in S: p_S(u)>p_V(u)$, if and only if  $\forall u\in S:|V_1|<\frac{\fin{u, S}\eout{u, S}}{\ein{u, S}}-\fout{u, S}=\lambda(u,S)$, as required.
\qed \end{proof}

We now give a sufficient condition for social cohesion of an $\NDU_2$ network. The {\em size} of a network is its number of nodes. A {\em clique} is a complete subgraph. The {\em clique number} of $G$, denoted by  $\omega(G)$, is the size of the largest clique. Tur\'{a}n's theorem relates $\omega(G)$ with the number of edges in $G$:

\begin{theorem}[Tur\'an \cite{Turan}]\label{thm:turan} For any $p\geq 2$, if a graph $G$ with $n$ nodes has more than $\frac{p-2}{2(p-1)} n^2$ edges, then $\omega(G)\geq p$.
\end{theorem}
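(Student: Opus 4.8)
The statement is equivalent, by contraposition, to the classical extremal bound of Tur\'an: every graph $G$ on $n$ vertices with $\omega(G)\leq p-1$ (i.e.\ containing no $K_p$) has at most $\frac{p-2}{2(p-1)}n^2$ edges. The plan is to prove this contrapositive by induction on $n$, and it suffices to treat the graph $G$ that \emph{maximises} the number of edges among all $K_p$-free graphs on $n$ vertices, since a general $K_p$-free graph has no more edges than such an extremal one.

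First I would dispose of the base case. When $n\leq p-1$, the complete graph $K_n$ is already $K_p$-free, and one checks directly that $\binom{n}{2}=\frac{n(n-1)}{2}\leq \frac{p-2}{2(p-1)}n^2$, precisely because $\frac{n-1}{n}\leq\frac{p-2}{p-1}$ exactly when $n\leq p-1$. For the inductive step with $n\geq p$, I first observe that an edge-maximal $K_p$-free graph must contain a clique of size $p-1$: since $G\neq K_n$ (otherwise it would contain $K_p$), there is a non-edge $uv$, and by maximality adding $uv$ creates a $K_p$; hence $u$ together with the $p-2$ common neighbours witnessing that $K_p$ already forms a $K_{p-1}$ in $G$. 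Fix such a clique on a vertex set $A$ with $|A|=p-1$, and set $B=V\setminus A$, so $|B|=n-p+1$.

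Next I would partition the edges of $G$ into three groups and bound each. Inside $A$ there are $\binom{p-1}{2}$ edges. For the edges between $A$ and $B$, each vertex of $B$ contributes at most $p-2$ of them, since a vertex of $B$ adjacent to all of $A$ would complete a $K_p$; this gives at most $(p-2)(n-p+1)$. Finally, the edges inside $B$ are counted by applying the induction hypothesis to the $K_p$-free graph $G\restrict B$ on $n-p+1$ vertices, yielding at most $\frac{p-2}{2(p-1)}(n-p+1)^2$. Writing $q=p-1$ and $k=n-q$, the three bounds sum to $\frac{q(q-1)}{2}+(q-1)k+\frac{q-1}{2q}k^2=\frac{q-1}{2q}(q^2+2qk+k^2)=\frac{q-1}{2q}(q+k)^2=\frac{p-2}{2(p-1)}n^2$, which is exactly the target inequality.

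I expect the main obstacle to be twofold but mild: justifying that the extremal graph contains a $K_{p-1}$ (handled by the maximality argument above) and confirming that the three edge counts add up to \emph{exactly} $\frac{p-2}{2(p-1)}n^2$ rather than merely $O(n^2)$ — the clean factorisation $\frac{q-1}{2q}(q+k)^2$ is what makes the induction close with no slack, so the bookkeeping of $q=p-1,\,k=n-p+1$ must be done carefully. Should the clique-extraction step prove awkward, an alternative is Zykov's symmetrisation: repeatedly replacing a vertex by a copy of a non-adjacent vertex of at least equal degree never decreases the edge count nor creates a $K_p$, forcing the extremum to be a complete $(p-1)$-partite graph whose edge count is maximised by balancing the parts; I would nonetheless present the self-contained induction first.
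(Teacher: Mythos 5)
Your proposal is correct, but there is nothing in the paper to compare it against: the paper states Tur\'an's theorem as a black-box citation to \cite{Turan} and supplies no proof of its own (it is used only as an ingredient in Lemma~\ref{turan} and Lemma~\ref{theorem:grand_co}). Your argument is a complete and standard proof of exactly the quantitative form stated here, which is the weak (non-exact) version of Tur\'an's theorem bounding $K_p$-free graphs by $\frac{p-2}{2(p-1)}n^2$ rather than by the precise Tur\'an-graph edge count. The individual steps check out: the base case reduces to $(n-1)(p-1)\leq (p-2)n$, which holds precisely when $n\leq p-1$; the reduction to an edge-maximal $K_p$-free $G$ is legitimate because the induction hypothesis is stated for \emph{all} $K_p$-free graphs on fewer vertices, so it applies to $G\restrict B$ even though $G\restrict B$ need not be edge-maximal; the extraction of a $K_{p-1}$ is sound since the $K_p$ created by adding the non-edge $uv$ must use that edge, and deleting $v$ leaves a clique entirely inside $G$; and with $q=p-1$, $k=n-q$ the three counts sum exactly to $\frac{q-1}{2q}(q+k)^2=\frac{p-2}{2(p-1)}n^2$, so the induction closes with no slack. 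Even the degenerate case $p=2$ goes through (the bound is $0$, and the argument forces an empty graph). One cosmetic flaw: in the chain of inequalities the line $\fin{v,S}>(c-1)(|S|-\fin{v,S})$ appears with stray variables $v$ and $c$ where you mean $u$ and $k$ --- the same kind of typo that, amusingly, also occurs in the paper's own Lemma~\ref{turan}. Your fallback via Zykov symmetrisation would also work but is unnecessary; the self-contained induction you lead with is the right choice for a statement the paper merely imports.
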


\begin{lemma}\label{turan}
	For any social group $S\subseteq V$, there exists $u\in S$ with $\frac{\fin{u, S}}{\ein{u, S}}\mathord{\leq}\omega(G\restrict S)\mathord{-}1$.
\end{lemma}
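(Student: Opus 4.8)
The plan is to prove the statement by contradiction, with Tur\'an's theorem (Theorem~\ref{thm:turan}) doing the heavy lifting. Write $p \coloneqq \omega(G\restrict S)$ and suppose, toward a contradiction, that \emph{every} node $u\in S$ satisfies $\fin{u,S}/\ein{u,S} > p-1$. I would first record that the quotient is always well defined: since no node is adjacent to itself, $\ein{u,S} = |S|-\fin{u,S}\geq 1 > 0$, so I may clear denominators freely without tracking signs.

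Next I would convert the assumed inequality into a uniform lower bound on degrees. Multiplying $\fin{u,S}/\ein{u,S} > p-1$ through by $\ein{u,S}=|S|-\fin{u,S}$ and collecting terms gives $p\cdot\fin{u,S} > (p-1)|S|$, that is $\fin{u,S} > \tfrac{p-1}{p}|S|$ for every $u\in S$. Summing over all $u\in S$ and using $\sum_{u\in S}\fin{u,S}=\sum_{u\in S}\deg_S(u)=2|E\restrict S|$ then yields the edge bound $|E\restrict S| > \tfrac{p-1}{2p}|S|^2$ on the $|S|$-node graph $G\restrict S$.

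Finally I would feed this into Tur\'an's theorem applied to $G\restrict S$ with threshold index $p'=p+1$. The crucial observation is that the Tur\'an coefficient for $p'=p+1$ is exactly $\tfrac{(p+1)-2}{2((p+1)-1)}=\tfrac{p-1}{2p}$, so the edge bound from the previous step is precisely the hypothesis of Theorem~\ref{thm:turan}. Hence $\omega(G\restrict S)\geq p+1$, contradicting $p=\omega(G\restrict S)$. Therefore at least one $u\in S$ violates the strict inequality, i.e.\ $\fin{u,S}/\ein{u,S}\leq p-1$, which is the claim.

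The one step I would be most careful about — and essentially the only subtlety — is aligning the two constants so that Tur\'an is invoked at index $p+1$ rather than $p$; everything else is routine algebra. I would also note in passing that the degenerate case $|S|=1$ (where $p=1$ and the inequality reads $0\leq 0$), and more generally any isolated node of $S$, satisfies the claim trivially, so connectivity of $S$ is not actually needed for the argument; the social-group hypothesis merely fixes the context and ensures $p\geq 2$ once $|S|\geq 2$.
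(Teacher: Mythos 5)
Your proof is correct and takes essentially the same route as the paper's: assume $\fin{u,S}/\ein{u,S}>p-1$ for all $u\in S$, clear denominators to get $\fin{u,S}>\frac{p-1}{p}|S|$, sum degrees to obtain more than $\frac{p-1}{2p}|S|^2$ edges in $G\restrict S$, and invoke Tur\'an's theorem at index $p+1$ to force a clique of size $p+1$, contradicting $p=\omega(G\restrict S)$. Your explicit side remarks --- that $\ein{u,S}\geq 1$ makes the quotient well defined, and that the Tur\'an coefficient must be aligned at index $p+1$ rather than $p$ --- are sound details the paper leaves implicit.
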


\begin{proof}
	Let $k=\omega(G\restrict S)$ and suppose for all $u\in S$, $\frac{\fin{u, S}}{\ein{u, S}}> k-1$. Since $|S|=\fin{u, S}+\ein{u, S}$,
	\begin{align*}
    k-1&<\fin{u, S}/(|S|-\fin{u, S})\\
	\frac{\fin{u, S}}{|S|-\fin{u, S}} &> k-1\\
	\fin{v,S} &> (c-1)(|S|-\fin{v, S}) \\
	\fin{u,S} &> k|S|-k\fin{u, S}-|S|+\fin{u, S}\\
	k\fin{u, S}&> |S|\cdot (k-1)\\
	\fin{u, S} &> (k-1)|S|/k
	\end{align*}
	Thus $E\restrict S$ contains  $>\frac{k-1}{2k}|S|^2$ edges. By Theorem~\ref{thm:turan}, $G\restrict S$ contains a size-$(k+1)$ clique, contradicting $k$'s definition.
\qed \end{proof}

The following is a sufficient condition for social cohesion of an $\NDU_2$ network $G$.
Intuitively, the set $V_1$ contains the most socially active members -- those who interact with everyone else. Hence
they serve as ``socializers'' who hold the group together. The larger $V_1$ gets, the more likely $G$ will be socially cohesive. There is a bound such that once $|V_1|$ exceeds it, the network $G$ is guaranteed to be socially cohesive.

\begin{lemma}\label{theorem:grand_co}
	Suppose $c=\omega(G\restrict V_2)$ and $|V_2|>c(c-1)$. Then $G$ is socially cohesive whenever $|V_1|\geq (c-1)(|V_2|-c)$.
\end{lemma}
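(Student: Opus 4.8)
The plan is to reduce everything to the exact criterion of Theorem~\ref{lmm:grand_ne_su_condition}: $G\in\NDU_2$ is socially cohesive precisely when every nonempty $S\subseteq V_2$ contains some $v$ with $|V_1|\geq\lambda(v,S)$. Since by hypothesis $|V_1|\geq (c-1)(|V_2|-c)$, it is enough to show that for every such $S$ some node $v\in S$ satisfies $\lambda(v,S)\leq (c-1)(|V_2|-c)$; then that $v$ witnesses the criterion and $S$ cannot block $\W_G$. So the whole task collapses to a single uniform upper bound on $\min_{v\in S}\lambda(v,S)$.

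First I would fix a nonempty $S\subseteq V_2$, abbreviate $s=|S|$ and $k=\omega(G\restrict S)$, and apply Lemma~\ref{turan} to obtain $u\in S$ with $\fin{u,S}/\ein{u,S}\leq k-1$. (The proof of Lemma~\ref{turan} given above uses only the edge count of $G\restrict S$ together with Tur\'an's theorem and never invokes connectivity, so it is valid for an arbitrary $S$, which is what Theorem~\ref{lmm:grand_ne_su_condition} requires.) For this $u$ I discard the nonnegative term $\fout{u,S}$ and bound $\eout{u,S}\leq |V_2\setminus S|=|V_2|-s$, giving
\begin{equation*}
\lambda(u,S)=\frac{\fin{u,S}}{\ein{u,S}}\,\eout{u,S}-\fout{u,S}\leq (k-1)(|V_2|-s).
\end{equation*}

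It then remains to compare $(k-1)(|V_2|-s)$ with the target $(c-1)(|V_2|-c)$. Writing $g(x)=(x-1)(|V_2|-x)$ and $t=\min(s,c)$, I would use $k\leq\omega(G\restrict S)\leq\min(s,c)=t$ and $s\geq t$ to conclude, since both factors are nonnegative, that $(k-1)(|V_2|-s)\leq (t-1)(|V_2|-t)=g(t)$. Finally I would establish $g(t)\leq g(c)$ by showing $g$ is nondecreasing on $\{1,\dots,c\}$: the parabola $g$ peaks at $x=(|V_2|+1)/2$, so it increases up to $c$ exactly when $|V_2|\geq 2c-1$, and this is exactly where $|V_2|>c(c-1)$ is used, because $(c-1)(c-2)\geq 0$ gives $c(c-1)\geq 2c-2$ and hence $|V_2|>2c-2$, i.e.\ $|V_2|\geq 2c-1$. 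Chaining the inequalities yields $\lambda(u,S)\leq g(t)\leq g(c)=(c-1)(|V_2|-c)\leq |V_1|$.

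The step I expect to be the main obstacle is this last comparison. The crude bound $(k-1)(|V_2|-s)$ can exceed the target for small sets unless one exploits that $s<c$ forces $k\leq s$, turning the relevant quantity into $g(t)$ with $t\leq c$; recognizing that the single numeric hypothesis $|V_2|>c(c-1)$ is precisely what makes $g$ monotone all the way to $c$ (rather than turning over earlier) is the delicate point. I would also check the degenerate cases separately: when $c=1$ the set $V_2$ is edgeless so $\lambda(v,S)\leq 0\leq|V_1|$ trivially, and when $S=V_2$ the factor $|V_2|-s$ vanishes, both of which are covered by the same bound.
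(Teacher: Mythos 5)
Your proof is correct, and while it rests on the same two pillars as the paper's --- the blocking criterion of Theorem~\ref{lmm:grand_ne_su_condition} and the Tur\'an-based Lemma~\ref{turan} supplying a node $u\in S$ with $\fin{u,S}/\ein{u,S}\leq\omega(G\restrict S)-1$ --- it takes a genuinely different route at the decisive comparison step. The paper splits into two cases according to whether $S$ contains a size-$c$ clique: if so, it bounds $\eout{u,S}\leq|V_2|-c$ and concludes directly; if not, it eliminates $\eout{u,S}$ and $\fout{u,S}$ algebraically via $\eout{u,S}=|V_2|-\fin{u,V_2}-\ein{u,S}$ and $\fin{u,S}=|S|-\ein{u,S}$, rewrites $\lambda(u,S)$ in terms of $|S|/\ein{u,S}$, and uses $c-|S|/\ein{u,S}\geq 1$ together with $|V_2|>c(c-1)$ to get $|V_1|-\lambda(u,S)\geq 0$. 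You instead prove the single uniform bound $\lambda(u,S)\leq(k-1)(|V_2|-|S|)$ with $k=\omega(G\restrict S)$ and compare it to $(c-1)(|V_2|-c)$ through the monotonicity of $g(x)=(x-1)(|V_2|-x)$ on $[1,c]$, for which you only need $|V_2|\geq 2c-1$. This buys two things: it avoids the case split and the heavier algebra of the paper's no-clique case, and it shows the numeric hypothesis $|V_2|>c(c-1)$ can be weakened to $|V_2|\geq 2c-1$ (strictly weaker once $c\geq 3$), since $(c-1)(c-2)\geq 0$ gives $c(c-1)\geq 2c-2$. Your parenthetical observation that the proof of Lemma~\ref{turan} never uses connectivity of $S$ is also well taken: the paper applies that lemma to arbitrary subsets $S\subseteq V_2$ even though it is stated for social groups, so your remark closes a small gap the paper's own proof leaves implicit. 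All the individual steps check out: $\ein{u,S}\geq 1$ guarantees the ratio is defined, $k-1\geq 0$ and $|V_2|-|S|\geq 0$ justify the two-factor comparison with $t=\min(|S|,c)$, and the degenerate cases $c=1$ and $S=V_2$ are indeed absorbed by the same bound.
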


\begin{proof}
	Suppose $|V_2|>c(c-1)$, $|V_1|\geq (c-1)(|V_2|-c)$. Take any $S\subseteq V_2$. If $S$ has a size-$c$ clique,
	by Lemma~\ref{turan} there exists $u \in V_2$ with $\frac{\fin{u, S}}{\ein{u, S}} \leq c-1$. Since $\eout{u,S} \leq |V_2|-c$, we have
	$$|V_1| \geq (c-1)(|V_2|-c) \geq \frac{\fin{u, S}}{\ein{u, S}}\eout{u, S} > \lambda(u, S)$$
	and thus  $G$ is socially cohesive by Theorem~\ref{lmm:grand_ne_su_condition}.
	
	If $S$ contains no clique of size $c$, then $\omega(G\restrict S)\mathord{\leq}c\mathord{-}1$. Let $k=\omega(G\restrict S)$. By Lemma~\ref{turan}, $\frac{\fin{u, S}}{\ein{u, S}}\mathord{\leq}k\mathord{-}1\mathord{<}c\mathord{-}1$. Thus
	$c\mathord{-}\frac{|V_2|}{\ein{v, S}}\mathord{\geq}1$.
	Since $\eout{u, S}\mathord{=}|V_2|\mathord{-}\fin{u, V_2}\mathord{-}\ein{u, S}$,
	\begin{align*}
	\lambda(u, S)&= \frac{\fin{u, S}}{\ein{u, S}}(|V_2|-\fin{u, V_2}-\ein{u, S}) - \fout{u, S}\\
	& =\frac{\fin{u, S}}{\ein{u, S}}(|V_2|-\fin{u, V_2}) - \fin{u, S}-\fout{u, S}\\
	&=\frac{\fin{u, S}}{\ein{u, S}}(|V_2|-\fin{u, V_2})-\fin{u, V_2}.
	\end{align*}
	Furthermore, since $\fin{u, S}=|S|-\ein{u,S}$,
	\begin{align*}
	\lambda(u, S)& = \frac{(|S|-\ein{u,S}) (|V_2|-\fin{u, V_2})}{\ein{u, S}} - \fin{u, V_2} \\
	&=\frac{|S|\cdot|V_2|-|S| \fin{u, V_2}-|V_2| \ein{u, S}}{\ein{u, S}}\\
	&=\frac{|S|\cdot |V_2|}{\ein{u, S}} -\frac{|S| \fin{u, V_2}}{\ein{u, S}}-|V_2|.
	\end{align*}
	Hence, $|V_1|-\lambda(u,S)$ is at least
	\begin{align*}
	& (c-1)(|V_2|-c) - \left(\frac{|S|\cdot |V_2|}{\ein{u, S}} -\frac{|S|\fin{u, V_2}}{\ein{u, S}}-|V_2|\right)\\
	& = |V_2|\left(c-\frac{|S|}{\ein{u, S}}\right)-c(c-1)+\frac{|S|\fin{u, V_2}}{\ein{u, S}} \geq 0
	\end{align*}
	The last step is by $c\mathord{-}\frac{|S|}{\ein{v, S}}\mathord{\geq}1$ and $|V_2|\mathord{>}c(c\mathord{-}1)$.
	By Theorem~\ref{lmm:grand_ne_su_condition}, $G$ is socially cohesive.	
\qed \end{proof}

\begin{theorem}\label{thm:complexity}The problem $\COH$ is $\mathsf{CoNP}$-complete. Furthermore $\mathsf{CoNP}$-hardness holds already for the class $\NDU_2$.
	%	Let $(G, \rho)$ be a popularity game. Deciding weather the game is cohesive is a $\mathsf{coNP}$-complete problem.
\end{theorem}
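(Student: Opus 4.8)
\noindent
The statement has two halves, membership in $\mathsf{CoNP}$ and $\mathsf{CoNP}$-hardness on $\NDU_2$, and the first is the routine one. The complement of $\COH$ is ``$G$ is not socially cohesive'', and a witness for it is simply a blocking set $S\subseteq V$ of $\W_G$: given $S$ one verifies $\deg_S(u)/|S|>\deg(u)/|V|$ for every $u\in S$ in polynomial time. Hence the complement lies in $\mathsf{NP}$ and $\COH\in\mathsf{CoNP}$. The real work is the hardness reduction, and the plan is to reduce from CLIQUE, which is $\mathsf{NP}$-complete; since its complement is $\mathsf{CoNP}$-complete, it suffices to build from a CLIQUE instance $(H,k)$, with $|V(H)|=n$, a graph $G\in\NDU_2$ that is socially cohesive \emph{if and only if} $H$ has no clique of size $k$.

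The construction places $H$ inside the eccentricity-$2$ part. I set $V_2=V(H)\cup D$, where $G\restrict V_2$ consists of $H$ together with a set $D$ of $P$ ``dummy'' vertices that are isolated within $V_2$, and then add a set $V_1$ of $N$ universal vertices adjacent to everything. Because every vertex of $V_2$ misses at least one dummy, while every vertex of $V_1$ is adjacent to all others, $\{V_1,V_2\}$ is exactly the eccentricity partition and $G\in\NDU_2$ (Theorem~\ref{thm:eccentricity}); moreover $\omega(G\restrict V_2)=\omega(H)$ since the dummies carry no edges. The two parameters are chosen as $P\approx kn$ and $N=(k-2)(|V_2|-k+1)$, a choice polynomial in the input, and this calibration is the heart of the argument.

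For the two directions I would invoke the earlier machinery. If $H$ contains a $k$-clique $S$, I take $S\subseteq V_2$ as a candidate blocking set: each $u\in S$ has $p_S(u)=(k-1)/k$ and $p_V(u)=(\deg_H(u)+N)/(|V_2|+N)$, and a short calculation shows $p_S(u)>p_V(u)$ is equivalent to $N<(k-1)|V_2|-k\deg_H(u)$. The role of the dummies is precisely to inflate $|V_2|$ so that the right-hand side dominates $N$ regardless of the (possibly large) degrees of the clique vertices in $H$; hence $S$ blocks $\W_G$ and, by Lemma~\ref{lmm:ndu_separatist}, $G$ is not cohesive. Conversely, if $H$ has no $k$-clique then $c\coloneqq\omega(G\restrict V_2)\le k-1$, and since $(c-1)(|V_2|-c)$ is increasing in $c$ over the relevant range it is maximised at $c=k-1$, giving $N=(k-2)(|V_2|-k+1)\ge(c-1)(|V_2|-c)$ together with $|V_2|>c(c-1)$; Lemma~\ref{theorem:grand_co} then certifies that $G$ is socially cohesive. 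Correctness of the reduction follows, and the map $(H,k)\mapsto G$ is the required polynomial reduction, so $\mathsf{CoNP}$-hardness holds already on $\NDU_2$.

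I expect the main obstacle to be the simultaneous calibration of $N$ against both thresholds. The yes-instances require $N$ strictly below $(k-1)|V_2|-k\max_{u}\deg_H(u)$, while the no-instances require $N\ge(c-1)(|V_2|-c)$ for $c=\omega(H)\le k-1$, whose maximum over the relevant range is $(k-2)(|V_2|-k+1)$. For an unpadded $H$ these two windows are typically empty---the clique-blocking threshold can even be negative when clique vertices have near-maximal degree in $H$. Introducing the dummy padding to blow up $|V_2|$ is the device that reconciles the two bounds uniformly in $H$; verifying that the padded size keeps the inequalities compatible, and that the small cases $k\le 2$ reduce to fixed instances, is the only delicate part of an otherwise direct argument.
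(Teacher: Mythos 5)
Your proposal is correct and takes essentially the same route as the paper's proof: membership via a blocking-set witness, and a reduction from CLIQUE in which the instance graph is padded with isolated vertices inside $V_2$, a universal clique $V_1$ is attached, the no-clique direction is certified by Lemma~\ref{theorem:grand_co}, and the clique direction by exhibiting the $k$-clique itself as a blocking set. The only difference is the parameter calibration---the paper pads with $k(k-1)+d$ isolated nodes where $d=k\cdot\max_{u}\deg(u)$ and sets $|V_1|=(k-1)(|V_2|-k)-d$, whereas you pad with roughly $kn$ dummies and set $N=(k-2)(|V_2|-k+1)$---an inessential variant of the same window argument, and your inequalities check out.
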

\begin{proof} The complement of $\COH$, $\overline{\COH}$, asks whether a set $S$ blocks the grand coalition $\W_G$ of a given network $G$; this problem is clearly in $\mathsf{NP}$ and thus $\COH$ is in $\mathsf{CoNP}$. For hardness, we reduce $\mathsf{MaxClique}$ (asking whether a graph contains a clique of a given size $k$) to $\overline{\COH}$. $\mathsf{CoNP}$-hardness of $\COH$ then follows from the $\mathsf{NP}$-hardness of $\mathsf{MaxClique}$ \cite{GareyJohnson}.

	\begin{algorithm}[!]
		\caption{Construction of  $H$ given $G = (V,E)$ and $k > 2$}\label{construction}
		\begin{algorithmic}[1]
			\State Set $d \coloneqq k\cdot \max\{\deg(u) \mid u\in V\}$
			\State Create $G'$ by adding $k(k\mathord{-}1)\mathord{+}d$ isolated nodes (and no new edges) to $G$
			\State Let $V_2$ be the set of nodes in $G'$ (which contains all nodes in $V$ and new nodes)
			\State Create a complete graph with $(k\mathord{-}1)(|V_2|\mathord{-}k)\mathord{-}d$ nodes; Let $V_1$ be the set of these nodes
			\State Create edges $\{uv\mid u\in V_1,v\in V_2\}$ to connect $V_1,V_2$. The resulting graph is $H$.
		\end{algorithmic}
	\end{algorithm}

\noindent	To this end, we construct, for a given $G=(V,E)$ and $k>2$, a graph $H\in \NDU_2$ as in Alg.~\ref{construction}. Our goal is to show that $H$ is not socially cohesive if and only if $G$ contains a clique of size $k$. It is clear that $H$ is a $\NDU_2$ network with eccentricity partition $\{V_1,V_2\}$. Let $c=\omega(G)$. Suppose $c<k$. By definition of $V_1$ and $V_2$, we have $$|V_1|-(c-1)(|V_2|-c)=|V_2|(k-c)-k(k-1)+c(c-1)  - d.$$
Since $k\mathord{>}c$ and $|V_2|\mathord{>}k(k\mathord{-}1)\mathord{+}d$, $|V_1|\mathord{\geq}(c\mathord{-}1)(|V_2|\mathord{-}c)$. By Theorem~\ref{theorem:grand_co}, $H$ is cohesive.

Conversely, suppose $\omega(G)\geq k$. Let $C\subseteq V_2$ be a clique of size $k$. Take $u\in C$.
Since $\fin{u,C} = k - 1$, $\eout{u, C} = |V_2| - k - \fout{u, C}$ and $\ein{u, C}=1$,
\begin{align*}
\lambda(u, C) &= (k-1)(|V_2|-k-\fout{u, C}) - \fout{u, C}\\
&= (k-1)(|V_2|-k)-k\cdot \fout{u, C}
\end{align*}
Hence, $\lambda(u,C)\mathord{-}|V_1|=d-k\fout{u, C}$.
Since $d\geq k \deg(u)$, $\lambda(u, C)-|V_1|>0$. By Lemma~\ref{lmm:grand_ne_su_condition}, $G$ is not socially cohesive. Therefore, $G$ contains a clique of size $k$ if and only if $H$ is not socially cohesive and the reduction is complete.
\qed \end{proof}

\section{Efficient Heuristics} \label{sec:experiment}
We propose two heuristics that construct group structures of a given network where players enjoy high popularity. These heuristics (partially) solve $\COH$ despite $\COH$'s inherent complexity: Each heuristic builds a group structure $\W$ and checks if any set $S \in \W$ blocks  $\W_G$.  If $G$ is socially cohesive, then no such $S$ will be found; On the other hand, if a blocking set $S$ is found, $G$ is surely not socially cohesive. %Hence the heuristics (partially) solve $\COH$.

%If a blocking set is identified, the network is certainly not socially cohesive. Each heuristic builds a group structure $\W$ of the input network $G$, and locate potential blocking set among the coalitions in $\W$.
%this problem. Both  heuristics generate sets of coalitions which are potentially the blocking sets for the game. If all the generated coalitions are not blocking, we assume that the popularity game is cohesive.

\smallskip

\noindent {\bf Heuristic 1: Louvain's method ($\LM$)} We observe that blocking sets of $W_G$ are usually tightly connected within, but are sparsely connected with nodes outside. This property corresponds to the well-studied notion of {\em communities}  \cite{Fortunato}. Therefore, the first heuristic uses a well-known  community detection algorithm, Louvain's method \cite{Blondel}, to compute a group structure in $G$.

%($\LM$) follows the intuition from the study of community detection. Theorem~\ref{thm:small} implies that any sufficiently large network is deemed to be socially incoherent, and sub-networks would emerge forming more coherent groups, or, as named in social network analysis, {\em communities}. Hence, it is easy to find a blocking set in social networks including synthetic and real-world networks. Our first approach is to use the Louvain's method, a well-established and generally adopted community detection algorithm \cite{Blondel}, to obtain a reasonable graph partition and check if each community is socially rational.

\smallskip

\noindent {\bf Heuristic 2: Average payoff ($\AP$)} The second heuristic aims to optimize the average payoffs of members of a coalition. Socially cohesive networks usually have small diameters ($\leq 2$). Thus we consider neighborhood $N(v)\coloneqq\{v\}\cup \{u\mid uv\in E \}$ of players $v \in V$. In Alg.~\ref{algo:ap}, let $\nu(S)$ be the average payoff $\sum_{u\in S}\rho(u, S)/|S|$ in any set $S\subseteq V$.

%The second heuristic ($\AP$) follows the intuition from the study of game theory. Blocking set can be viewed as the ``unfairness" in a popularity game i.e. a player who forms a blocking set unilaterally decrease her payoff in her social group. Hence, we develop an algorithm \ref{algo:ap} that evaluates the average payoff of a coalition. In this algorithm, given a graph $G=(V, E)$, we only examine the coalition which is formed by a player $v$ and her adjacent players $\{u \mid uv \in E\}$.

%In the following set $N(v)\coloneqq \{v\}\cup \{u\mid uv\in E \}$ and $\nu(v)\coloneqq\frac{1}{|N(v)|}\sum_{u\in N(v)}\rho(u, N(v))$.
			
	\begin{algorithm}
		\caption{$\AP$: Given a network $G=(V, E)$}\label{algo:ap}
		\begin{algorithmic}[1]
            \State Initialize set $V'\coloneqq V$ and a group structure $\W\coloneqq \varnothing$
			%\State $V'\coloneqq V$, $\G\coloneqq \varnothing$
			\While{$|V'|>0$}
            \State compute $\nu_w\coloneqq \nu(N(w)\cap V')$ for every $w\in V'$
			\State $S\coloneqq N(v)\cap V'$ such that $\forall u\in V'\colon \nu_u\leq \nu_v$
            \State $\W\coloneqq \W\cup \{S\}$ \hfill $\rhd$ Add $S$ to $\W$
            %$\neg \exists u\in nodes$ such that $avg(u)>avg(v)$
			\State $V'\coloneqq V'\setminus S$
			\EndWhile
			\State \Return $\W$
		\end{algorithmic}
	\end{algorithm}

%\subsection{Enumerated Networks}
\noindent {\bf Experiments.} We evaluate the heuristics on graphs of size $n=5,\ldots,18$. For each size, let $s$ be the total number of graphs, $b$ be the number of graphs for which the heuristic finds a blocking set, and $c$ be the number of socially cohesive graphs. The heuristic thus correctly solves $\COH$ for $b+c$ graphs. Hence  the heuristic has {\em accuracy} $(b+c)/s$. For $n=5,6,7$, we exhaustively enumerate all connected graphs; $\LM$ has accuracy $96.8\%$, $97.7\%$ and $83.1\%$, resp., while $\AP$ has $85.9\%$, $68.5\%$ and $69.2\%$, resp. For each $n=8,\ldots,18$, we uniformly randomly sample $10^5$ graphs of size $n$. As shown in Fig.~\ref{fig:enumerated_accuracy}(a), both heuristics achieve high accuracy. As $n$ increases, socially cohesive graphs become increasingly rare. The results show that the heuristics successfully find blocking sets in almost all cases, with $\AP$ performs slightly better ($100\%$ accuracy for $n\geq 12$). We then consider the coalitions constructed by the heuristics. Fig.~\ref{fig:enumerated_accuracy}(b) shows that while $\LM$ fails to output core stable group structures, $\AP$ achieves core stability in $60\%$ of the sampled cases when $n\geq 7$. Nevertheless, Fig.~\ref{fig:enumerated_accuracy}(c) shows that, compared to the payoffs in the grand coalition, more nodes get a higher payoff in the coalitions identified by $\LM$ than in the coalitions identified by $\AP$.

%The experiments show that both heuristics have high accuracys.% identify socially cohesive graphs and identify blocking set for sufficiently large graphs.

\begin{figure}
	\centering
	\includegraphics[width=\textwidth]{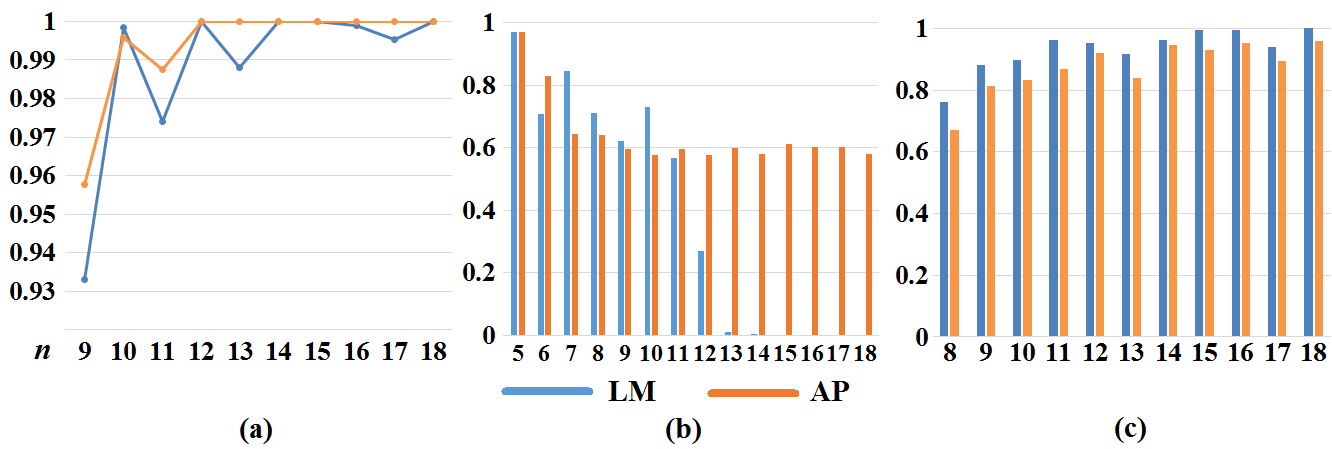}
	\caption{  The results of running heuristics $\LM$ and $\AP$ on graphs with $n=5,\ldots,18$ nodes. (a) Accuracy (b) The percentage of graphs for which the heuristics find a core stable group structure (c) The percentage of nodes with an increased payoff in the found group structure than in the grand coalition.}
	\label{fig:enumerated_accuracy}%\vspace*{-0.6cm}
\end{figure}

\smallskip

\noindent {\bf  Real world networks.}
We further evaluate the heuristics on 8 real-world networks: karate club $\ZA$ \cite{Zachary},  dolphins $\DO$ \cite{Dolphin},   college football $\FT$ \cite{GirvanNewman}, Facebook $\FB$, Enron email network $\EN$ \cite{Leskovec}, and three physics collaboration networks $\AS$, $\CM$ and $\HE$ \cite{Leskovec2}. We only use the largest components in each network; see details in Table.~\ref{table:real_world_1}. Expectedly, none of these networks are socially cohesive.
%We use Louvain's method to detect communities in these real world networks.
%Since most of these networks are  disconnected,
%Our experiment reveals that, in  all networks considered, the identified group structures by $LA$ are socially rational.
The box-and-whisker diagrams in Fig.~\ref{fig:real-box} show the distribution of payoffs of players in the grand coalition as well as in the coalitions output by each heuristic (outliers omitted). In all cases, the heuristics improve players' payoffs considerably compared to the grand coalition, while $\AP$ in particular achieves higher payoffs. Furthermore, Fig.~\ref{fig:increasing-rate} shows that all nodes get higher payoffs through $\LM$.
In summary, both of the heuristics are useful in computing coalitions; while $\LM$ may benefit a larger portion of players, $\AP$ tends to obtain higher payoffs. % but only $\LM$ shows social rationality in its group structures.

\begin{table}
	\centering
	\caption{Statistics of the real-world networks used in our experiment. The parameters $N$, $E$, $C$ denote the number of nodes, edges, and communities, respectively.}
	\label{table:real_world_1}
    %\resizebox{!}{2cm}{\includegraphics{table.png}}%\vspace*{-0.8cm}
	\begin{tabularx}{0.5\textwidth}{@{\extracolsep{\fill}}|c|c|c|c|c|}
		\hline
		& {$\ZA$}  &{$\DO$}  &{$\FT$} &{$\FB$}      \\ \hline
		$N$ &34  & 62  &  115  &3959     \\ \hline
		$E$ &78  &159  &613  &84243    \\ \hline
%		$N$ &34  &62  &115  &3927      \\ \hline
%		$E$ &78  &159 &613  &84210      \\ \hline
		$C$ &4  &5  &10  &27      \\ \hline\hline
	    &{$\EN$} &{$\AS$}  &{$\CM$}  &{$\HE$}    \\ \hline
	    $N$   &36692 &18772  &23133  &12008    \\ \hline
		$E$   &183831 &198110  &93497  &118521    \\ \hline
%		$N$  &36696 &17903  &21363  &11204     \\ \hline
%		$E$   &180811 &197031  &91342  &117649     \\ \hline
		$C$  &248 &36  &58  &37     \\ \hline
	\end{tabularx}	
\end{table}

%\begin{figure}
%	\centering
%	\includegraphics[width=0.425\textwidth]{random-networks-box.png}
%	\caption{Payoffs generated by the grand coalition and the heuristics in random networks.}
%	\label{fig:random-box}
%\end{figure}

\begin{figure}
	\centering
	\includegraphics[width=\textwidth]{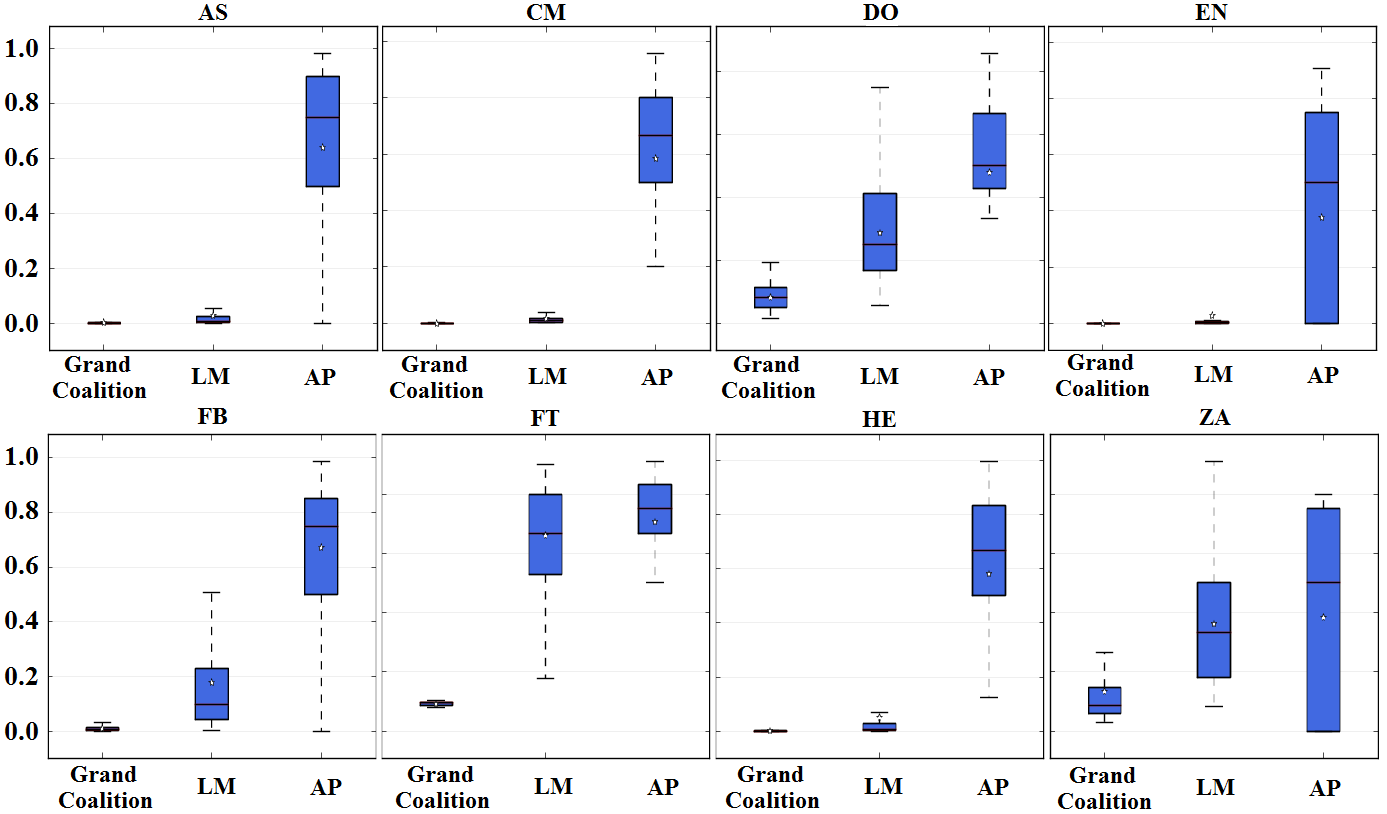}
	\caption{  The distribution of payoffs of players in the grand coalition and in the coalitions computed by the heuristics in the real-world networks.}
	\label{fig:real-box}%\vspace*{-1cm}
\end{figure}
%\vspace*{-0.2cm}

%\vspace*{-0.1cm}
\section{Conclusion and future work}\label{sec:conclusion}
This paper investigates social cohesion through a type of cooperative games. We show that social cohesion are closely related to structural cohesion and demonstrate that checking social cohesion is computationally hard.
We aim to investigate natural game-theoretical and computational questions as future works:
% Is there a diameter bound exceeding which the network is guaranteed to be socially incoherent?
Does a core stable group structure exists for every network? What about other stability concepts? What would be strategies of players to improve popularity?
The proposed games are instances of a more general framework for network-based cooperative games, where payoffs of players are given by various centrality indices. It is interesting to extend the work by considering other centralities, and different forms of social networks (e.g. directed, weighted, signed). Furthermore, one could also explore the evolution of social groups in a dynamic setting.

\begin{figure}
	\centering
	\includegraphics[width=\textwidth]{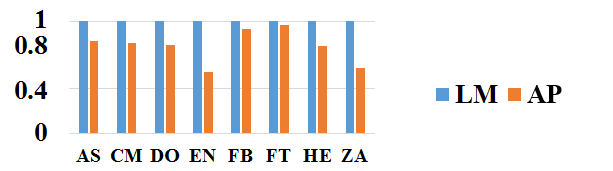}
	\caption{  The percentage of nodes with a higher payoff (compared to the grand coalition) in coalitions generated by the heuristics for real-world networks.}
	\label{fig:increasing-rate}%\vspace*{-0.7cm}
\end{figure}

%\fontsize{9.pt}{10.pt}
%\selectfont

\bibliographystyle{splncs03}

\end{document}